\documentclass[12pt, onecolumn, draftclsnofoot,peerreview]{IEEEtran}
\usepackage{times}
\usepackage[final]{graphicx}
\usepackage[reqno]{amsmath}
\usepackage{amsfonts}

\usepackage{epsfig}
\usepackage{amssymb}

\newtheorem{proposition}{Proposition}
\newtheorem{theorem}{Theorem}
\newtheorem{lemma}{Lemma}
\newtheorem{definition}{Definition}
\newtheorem{corollary}{Corollary}

\DeclareMathAlphabet{\mathpzc}{OT1}{pzc}{m}{it}

\begin{document}

\title{Fingerprinting with Minimum Distance Decoding}

\author{
\authorblockN{Shih-Chun Lin, Mohammad Shahmohammadi and Hesham El Gamal*}
\thanks{S. C. Lin is with Department of Electrical Engineering,
National Taiwan University, Taipei, Taiwan 10617. The work of S.
C. Lin was supported by ``Graduate Students Study Abroad Program''
of National Science Council, Taiwan, R.O.C.}
\thanks{M. Shahmohammadi and H. El Gamal are with Department of
Electrical and Computer Engineering, The Ohio State University,
Columbus, OH, 43210. This work was partly performed while Hesham
El Gamal was visiting Nile University, Cairo, Egypt. The authors
acknowledge the generous funding of the National Science
Foundation, USA
 }
\thanks{E-mail: \{lins, shahmohm, helgamal\}@ece.osu.edu.} }

\maketitle \IEEEpeerreviewmaketitle

\begin{abstract}
This work adopts an information theoretic framework for the design
of collusion-resistant coding/decoding schemes for digital
fingerprinting. More specifically, the minimum distance decision
rule is used to identify $1$ out of $t$ pirates. Achievable rates,
under this detection rule, are characterized in two distinct
scenarios. First, we consider the averaging attack where a random
coding argument is used to show that the rate $1/2$ is achievable
with $t=2$ pirates. Our study is then extended to the general case
of arbitrary $t$ highlighting the underlying complexity-performance
tradeoff. Overall, these results establish the significant
performance gains offered by minimum distance decoding as compared
to other approaches based on orthogonal codes and correlation
detectors which can support only a sub-exponential number of users
(i.e., a zero rate). In the second scenario, we characterize the
achievable rates, with minimum distance decoding, under any
collusion attack that satisfies the marking assumption. For $t=2$
pirates, we show that the rate $1-H(0.25)\approx 0.188$ is
achievable using an ensemble of random linear codes. For $t\geq 3$,
the existence of a {\em non-resolvable} collusion attack, with
minimum distance decoding, for any non-zero rate is established.
Inspired by our theoretical analysis, we then construct
coding/decoding schemes for fingerprinting based on the celebrated
Belief-Propagation framework. Using an explicit repeat-accumulate
code, we obtain a vanishingly small probability of misidentification
at rate $1/3$ under averaging attack with $t=2$. For collusion
attacks which satisfy the marking assumption, we use a more
sophisticated accumulate repeat accumulate code to obtain a
vanishingly small misidentification probability at rate $1/9$ with
$t=2$. These results represent a marked improvement over the best
available designs in the literature.
\end{abstract}

\vspace{-4mm}
\begin{center}
   {\underline{\bf \small EDICS}} \hspace{3mm} {\small WAT-FING}
\end{center}
\vspace{-6mm}

\section{Introduction}
Digital fingerprinting is a paradigm for protecting copyrighted data
against illegal distribution \cite{Boneh}. In a nutshell, a
\emph{distributor}, i.e., the provider of copyrighted data, wishes
to distribute its data $\mathbb{D}$ among a number of licensed
\emph{users}. Each licensed copy is identified with a mark, which
will be referred to as a \emph{fingerprint} in the sequel, composed
of a set of redundant digits embedded inside the copyrighted data.
The locations of the redundant digits are kept \emph{hidden} from
the users and are only known to the distributor. Their positions,
however, remain the same for all users. If any user re-distributes
its data in an unauthorized manner, it will be easily identified by
its fingerprint. However, several users may collude to form a
\emph{coalition} enabling them to produce an unauthorized copy which
is difficult to trace. In the literature, the colluding members are
typically referred to as \emph{pirates} or \emph{colluders}. Hence,
the need arises for the design of collusion-resistant digital
fingerprinting techniques. Our work develops an information
theoretic framework for the design of low complexity {\em
pirate-identification} schemes.

To enable a succinct development of our results, we first consider
the widely studied \emph{averaging attack}~\cite{Liu}. The
colluders, in this strategy, average their media contents to
produce the forged copy. An explicit fingerprinting code
construction for this attack was proposed in \cite{Liu}. In this
construction, however, the maximum number of users $M$, grows only
polynomially with the fingerprinting code-length $n$ (more
precisely $M=O(n^2)$). Clearly, this rate of growth corresponds to
a zero rate in the information theoretic sense. This motivates our
pursuit for a fingerprinting scheme which supports an
exponentially growing number of users, with the code-length, while
allowing for low complexity pirate-identification strategies.
Towards this goal, we use a random coding argument to establish
the existence of a rate $0.5$ {\em linear} fingerprinting code
which achieves a vanishingly small probability of
misidentification when 1) Only $t=2$ pirates are involved in the
averaging attack and 2) The low complexity minimum distance (MD)
decoder is used to identify one of the two pirates. The enabling
observation is the intimate connection between the scenario under
consideration and the binary erasure channel (BEC). This result is
then extended to the general case with an arbitrary coalition size
$t$ where the tradeoff between complexity and performance is
highlighted.

Building on our analysis for the averaging attack, we then proceed
to fingerprinting strategies which are resistant to more general
forging techniques. More specifically, we adopt the \emph{marking
assumption} first proposed in~\cite{Boneh}. In this framework, the
pirates attempt to identify the positions occupied by the
fingerprinting digits by comparing their copies. Afterwards, they
can {\em only} modify the identified coordinates, in any desired
way, to minimize the probability of traceability. The validity of
the marking assumption hinges on the assumption that any
modification to the data content $\mathbb{D}$ will damage it
permanently. This prevents the users from modifying any location
in which they do not identify as a fingerprinting digit since it
{\em may be} a data symbol. Boneh and Shaw~\cite{Boneh} were the
first to construct fingerprinting codes that are resistant to
attacks that satisfy the marking assumption. This approach was
later extended in \cite{Barg_code} using the idea of separating
codes \cite{Spt}. To the best of our knowledge, the best available
explicit binary fingerprinting codes are the {\em low rate} codes
presented in~\cite{Barg_code}. For example, for $t=2$, the best
available code has a rate$\approx 0.0092$. More recently, upper
and lower bounds on the binary fingerprinting capacity for $t=2$
and $t=3$ were derived in~\cite{Barg_cap}. The decoder used in
\cite{Barg_cap}, however, was based on exhaustive search, and
hence, would suffer from an exponentially growing complexity in
the code length. This prohibitive complexity motivates our
proposed approach. In this paper, we show that using linear
fingerprinting codes and MD decoding, one can achieve rates less
than $0.188$ when the coalition size is $t=2$. Unfortunately, the
proposed approach does not scale for $t\geq 3$. This negative
result calls for a more sophisticated identification technique
inspired by the analogy between our set-up and multiple access
channels. Our results in this regard will be reported elsewhere.

Since the complexity of the {\em exact} MD decoder can be
prohibitive when the code-length is long, we develop a low
complexity belief-propagation (BP) identification approach
\cite{richardson2001cld}\cite{pishronik2004dld}. This detector
only requires a linear complexity in $n$, and offer remarkable
performance gain over the best known explicit constructions for
fingerprinting \cite{Barg_code}\cite{Liu}. For example, we propose
a modified iterative decoder tailored for the averaging attack
with $t=2$. Using this decoder along with an explicit
repeat-accumulate (RA) fingerprinting code, we achieve a
vanishingly small probability of misidentification for rates up to
$1/3$. For the marking assumption set-up, we achieve a vanishingly
small misidentification probability for rates up to $1/9$ using
the recently proposed class of low rate accumulate repeat
accumulate (ARA) codes~\cite{divsalar}. It is worth noting that
these results represent a marked improvement over the state of the
art in the literature. Furthermore, one would expect additional
performance enhancement by optimizing the degree sequences of the
codes (which is beyond the scope of this work).

The rest of the paper is organized as follows. In
Section~\ref{Sec_Not}, we introduce the mathematical notations and
formally define our problem setup. Then we explore the theoretical
limits of fingerprinting using the MD decoder in
Sections~\ref{Sec_Theorem}~and~\ref{Sec_Theorem_2}. The simulation
results based on the BP framework are presented in
Section~\ref{Sec_BP}. Finally, Section~\ref{Sec_conclu} offers some
concluding remarks.

\section{Notations and Problem statement} \label{Sec_Not}
Throughout the paper, random variables and their realizations are
denoted by capital letters and corresponding smaller case letters,
respectively. Deterministic vectors are denoted by bold-face
letters. We denote the entropy function by $H(\cdot)$, with the
argument being the probability mass function. Furthermore, for
simplicity, we abbreviate $H(p,1-p)$ by $H(p)$, where $1 \geq p
\geq 0$. For two functions of $n$, we write $a(n)\doteq b(n)$ if:
$\lim \limits_{n\rightarrow\infty }\frac{1}{n}\frac{a(n)}{b(n)}=
1$, for example, ${n \choose d}\doteq 2^{nH(\frac{d}{n})}$. The
Hamming distance between two vectors $\mathbf{x}_1,\mathbf{x}_2$
is denoted by $d_H(\mathbf{x}_1,\mathbf{x}_2)$. Without loss of
generality, we assume that the number of users is $M$, and hence,
a coalition $U$ of size $t$ is a subset of $\{1,2,\ldots,M\}$
where $|U| = t$. The goal of the coalition, in a nutshell, is to
produce a forged fingerprint, $\mathbf{y}$, such that the
distributor will not be able to trace it back to any of its
members. In the following, we first introduce the notation that
will be used for a general attack satisfying the marking
assumption and then specify our notations for the averaging attack
scenario. It should be noted that our formulation follows in the
footsteps of \cite{Barg_cap}. For completeness, however, we repeat
it here. As mentioned in \cite{Boneh}, deterministic
fingerprinting under the marking assumption is not possible in
general. Therefore, the distributor needs to employ some kind of
randomization which leads to a collection of binary codes $(F,G)$
composed of $K$ pairs of encoding and decoding functions as:
\begin{eqnarray}
& & f_k:\{1,2,\ldots,M\}\rightarrow \{0,1\}^n \\
& & g_k:\{0,1\}^n\rightarrow\{1,2,\ldots,M\} \nonumber \\
& & k = 1,2,\ldots,K, \nonumber
\end{eqnarray}
where the code rate $R$ is $\frac{\log_2 M}{n}$ and the secret key,
$k$ is a random variable employed to randomize the codebook. This
way, the exact codebook utilized for fingerprinting is kept hidden
from the users. It should be noted that, adhering to common
conventions in cryptography, the family of encoding and decoding
functions as well as the probability distribution of the secret key,
$p(k)$, are known to all users. Finally, it is clear from the
definition of $g_k$ that the objective of the distributer, in our
formulation, is to identify only one of the colluders correctly.

For simplicity of presentation, let's assume that $t=2$ then the
fingerprints corresponding to the coalition of users (also referred
to as pirates or colluders), $u_1,u_2$ are denoted by
$\{\mathbf{x}_1,\mathbf{x}_2\}$. The marking assumption implies that
position $i$ is \emph{undetectable} to the two colluders if
$x_{1i}=x_{2i}$, otherwise it is called
\emph{detectable}~\cite{Boneh}. Those undetectable coordinates can
not be changed by the pirates, and hence, the set of all possible
forged copies is give by

\begin{equation}
E(U) = \{\mathbf{y}\in \{0,1\}^n\mid y_i=x_{1i},\forall{i}\quad
undetectable  \}.
\end{equation}
In general, a coalition $U$ may utilize a random strategy that
satisfies the marking assumption to produce $\mathbf{y}$. That is,
if ${V(\mathbf{y}\mid \mathbf{x}_1,\mathbf{x}_2)}$ is the
probability that $\mathbf{y}$ is created, given the coalition
$\{\mathbf{x}_1,\mathbf{x}_2\}$, then we have:
\begin{equation}
{V(\mathbf{y}\mid \mathbf{x}_1,\mathbf{x}_2)} = 0 \qquad \textrm{for
all } \mathbf{y}\not\in E(U). \\ \label{Eq_Adm}
\end{equation}
In this paper, we focus on the maximum probability of
misidentification over the set of all strategies which satisfy
(\ref{Eq_Adm}) (denoted by $\mathcal{V}$ in the sequel). Similar
to~\cite{Barg_cap}, we average the probability of misidentification
over all possible coalitions leading to the following performance
metric:
\begin{equation} \label{Pm_Def}
\overline{P}_m(F,G):=\frac{1}{{M \choose t}}\sum_{U}{\max_{V \in
\mathcal{V}}P_m(U,F,G,V)},
\end{equation}
where
\[
P_m(U,F,G,V) := \mathbb{E}_K \Big( \sum_{\mathbf{y}\in
E(U),g_k(\mathbf{y})\notin U} V(\mathbf{y}\mid f_k(U))\Big ).
\]
In the case of an averaging attack, we employ the typical assumption
of mapping the binary fingerprints into the antipodal alphabets
$\{-1,1\}$ where the encoder now is defined as~\cite{Liu}
\begin{equation} 
f:\{1,2,\ldots,M\}\rightarrow \{-1,+1\}^n.
\end{equation}
As anticipated from the name, the forged copy is now given by:
\begin{equation} \label{Eq_avg_attack}
\mathbf{y}=\frac{1}{t}\sum_{i=1}^t\mathbf{x}_i,
\end{equation}
where the addition is over real field. The decoder is now defined as
\begin{equation} \label{Eq_g_avg}
g:\{\mathpzc{A}_\mathbf{y}\}^n\rightarrow\{1,2,\ldots,M\},
\end{equation}
where $\mathpzc{A}_\mathbf{y}$ is the alphabets of $\mathbf{y}$, for
example, it is $\{-1,0,+1\}$ when $t=2$. Misidentification will
happen if $g(\mathbf{y}) \notin U$. Note that for $t=2$, if
$g(\mathbf{y}) \in U$, i.e., we trace one colluder correctly then we
can always trace another colluder correctly according to
(\ref{Eq_avg_attack}). In this special case, the performance metric
in (\ref{Pm_Def}) reduces to
\begin{equation}
\overline{P}^a_m:=\frac{1}{{M \choose t}} \sum_{U}(g(\mathbf{y})
\notin U).
\end{equation}

\section{The Averaging Attack} \label{Sec_Theorem} In this section, we investigate the
theoretical achievable rate of fingerprinting code with the minimum
distance (MD) decoder under the averaging attack. First, we need the
following definition.
\begin{definition} \label{def_capacity}
We say that the capacity of of an ensemble of fingerprinting
codebooks ${\cal E}$  is $R_{\cal E}$ under MD decoding if

\begin{enumerate}

\item For $M=2^{nR}$ with $R<R_{\cal E}$, the average probability of misidentification over the ensemble $P_m$
using MD decoding goes exponentially to zero as the codelength $n$
goes to infinity.
\item Conversely, for $M=2^{nR}$ with $R>R_{\cal
E}$, there exists a constant $\delta>0$ such that $P_m>\delta$ for
sufficiently large block lengths.
\end{enumerate}
\end{definition}
Note that this converse in the previous definition is applicable
only to a specific family of codes similar to the approach taken
in~\cite{richardson2001cld,pishronik2004dld}. We also call a rate
is MD-achievable if only the first part in Definition
\ref{def_capacity} is met. We are now ready to prove our first result. \\

\begin{theorem} \label{Theorem_Avg_iid}
The fingerprinting capacity of the i.i.d codebook ensemble when
$t=2$ is $R_{\cal E}=0.5$ (under the averaging attack and the MD
decoder).
\\
\end{theorem}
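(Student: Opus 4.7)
The plan is to exploit the analogy with a binary erasure channel (BEC) that was already highlighted in the introduction. For $t=2$ and i.i.d.\ uniform codewords on $\{-1,+1\}^n$, the forged copy $\mathbf{y} = \tfrac{1}{2}(\mathbf{x}_1 + \mathbf{x}_2)$ equals the common value $x_{1i}=x_{2i}$ at every coordinate where the two pirates agree and equals $0$ at every coordinate where they disagree. Thus $\mathbf{y}$ is the output of a BEC (with either pirate as input) whose erasure probability is $\Pr[x_{1i}\neq x_{2i}] = 1/2$, which immediately suggests a Shannon-style threshold of $R_{\cal E} = 1/2$.

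For the achievability direction I would condition on the pirate Hamming distance $d = d_H(\mathbf{x}_1,\mathbf{x}_2)$ (i.e., the number of erased positions) and work with the natural extension of Hamming distance $d(\mathbf{x},\mathbf{y}) = |\{i : x_i \neq y_i\}|$. An elementary computation shows $d(\mathbf{x}_1,\mathbf{y})=d(\mathbf{x}_2,\mathbf{y})=d$, while any other codeword $\mathbf{x}$ satisfies $d(\mathbf{x},\mathbf{y}) \geq d$ with equality iff $\mathbf{x}$ agrees with $\mathbf{y}$ on all $n-d$ non-erased positions. Hence MD decoding errs only if some non-pirate codeword is ``confusing'' in this sense, which for a uniformly drawn codeword occurs with probability $2^{-(n-d)}$. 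A union bound over the $M-2$ remaining codewords gives $\Pr(\mathrm{error}\mid d) \leq (M-2)\,2^{-(n-d)} \doteq 2^{-n(1-R-d/n)}$, and a Chernoff bound concentrates $d/n$ around $1/2$; combining the two yields $\overline{P}^a_m \doteq 2^{-n(1/2 - R)} \to 0$ exponentially whenever $R < 1/2$.

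For the converse I would track the count $N$ of confusing non-pirate codewords. Conditional on $d$, $N$ is Binomial$(M-2,\,2^{-(n-d)})$, the independence being immediate from the i.i.d.\ construction. Restricting to the typical event $d \in [n/2 - \epsilon n,\,n/2 + \epsilon n]$ (which has probability $\to 1$ by Chernoff), the mean $E[N] \doteq 2^{n(R - 1/2 - \epsilon)}$ diverges whenever $R > 1/2$, so $\Pr(N\geq 1)\to 1$. By the symmetry of the random codebook, MD ties are distributed uniformly across the $N+2$ equidistant codewords, so on the event $N \geq 1$ the decoder outputs a non-pirate with probability at least $N/(N+2) \geq 1/3$, giving $\overline{P}^a_m \geq 1/3 + o(1)$ and matching the achievable threshold.

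The main obstacle in executing this plan is the tie-breaking behaviour of MD decoding: the achievability union bound is clean because a third codeword at distance exactly $d$ already counts as failure, but the converse requires the symmetry argument above so that ties translate into a genuine positive error probability. A secondary technical point is keeping the Chernoff deviation $\epsilon$ small enough that the exponent $R - 1/2 - \epsilon$ remains positive after conditioning on the typical event, which I would close by choosing $\epsilon$ sufficiently small after fixing $R > 1/2$.
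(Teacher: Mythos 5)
Your proposal is correct and follows essentially the same route as the paper: the erasure-decoder view of MD decoding, concentration of the pirate distance around $n/2$, a union bound over innocent codewords for achievability, and divergence of the number of consistent innocent codewords for the converse. The only substantive difference is that your converse replaces the paper's counting inequality $2^{nR}-2>2^{(n-|\mathpzc{E}|)}-1$ with an explicit Binomial$(M-2,\,2^{-(n-d)})$ calculation plus a tie-breaking symmetry step, which is if anything a cleaner justification of the same $1/3$ lower bound.
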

\begin{proof} The encoder and decoder come as follows. \\
\noindent \textbf{Encoder:} The encoder chooses codewords
uniformly and independently from all $2^n$ different vectors
belonging to $\{0,1\}^n$, transfers the fingerprinting codeword
alphabets from $\{0,1\}$ to $\{-1,+1\}$, and assigns the
fingerprints to the users.
\\ \\
\noindent \textbf{Decoder:} With the given forged fingerprint
$\mathbf{y}$, the decoder treats the position $i$ where
$\mathbf{y}_i=0$ as an erased position, and the others as unerased
positions. Let $\mathpzc{E}$ be the set of erasure positions and
$\overline{\mathpzc{E}} := [1:n] \setminus \mathpzc{E}$. Also let
$\mathbf{y}_{\overline{\mathpzc{E}}}$ denote those components of
$\mathbf{y}$ which are indexed by $\overline{\mathpzc{E}}$. The
decoder will search the codebook to find the codeword which agrees
with $\mathbf{y}$ in all unerased positions
$\mathbf{y}_{\overline{\mathpzc{E}}}$. Once the decoder finds such
a codeword, the decoder declares it as the pirate. A
misidentification occurs when the codeword of an innocent user
$\mathbf{z}$ is consistent with $\mathbf{y}$.
\\
Achievability: For a small $\varepsilon$, we say the assigned
fingerprints $\mathbf{x}_1,\mathbf{x}_2$ are \emph{close} if
$d_H(\mathbf{x}_1,\mathbf{x}_2)\leq n(\frac{1}{2}+\varepsilon)$,
here the fingerprinting alphabets are $\{0,1\}$ before
transformation. As shown in Appendix \ref{App_close_pair}, we know
that with high probability, $(\mathbf{x}_1,\mathbf{x}_2)$ are a
close pair. Thus, given a small $\epsilon>0$,
\begin{equation} 
|\mathpzc{E}| \leq n(\frac{1}{2}+\epsilon),
\end{equation}
since the erasures happen when the bits of $(\mathbf{x}_1,\mathbf{x}_2)$ are different.
For the given forged fingerprint $\mathbf{y}$, $\mathbf{z}$ must
agree with $\mathbf{y}$ in all $n-|\mathpzc{E}|$ unerased
positions, and can be $-1$ or $+1$ in the rest $|\mathpzc{E}|$
erased positions. The probability of choosing such codeword is
upper-bounded by
\begin{equation} 
2^{n*(1/2+\epsilon)}/2^n.
\end{equation}
By using the union bound, we know that for $R<1/2-\epsilon$, the
probability of misidentification $P_m$ tends to zero exponentially
fast for sufficiently large
codeword length $n$.
\\ \\
Converse: From (\ref{Eq_random_dis}) in the Appendix, we know that
$P(|\mathpzc{E}| \geq n/2)>P(|\mathpzc{E}|=n/2)=\delta$, where
$\delta$ is non-vanishing with respect to codeword length $n$. For a
fingerprinting codeword $\mathbf{x}$, we form
$\mathbf{x}_{\overline{\mathpzc{E}}}$ as the components of
$\mathbf{x}$ which are indexed by $\overline{\mathpzc{E}}$. And we
arrange all $\mathbf{x}_{\overline{\mathpzc{E}}}$ in the
fingerprinting codebook as rows of a $2^{nR} \times
(n-|\mathpzc{E}|)$ array $\mathbf{X}_{\overline{\mathpzc{E}}}$. The
misidentification happens if $\mathbf{y}_{\overline{\mathpzc{E}}}$
equals to more than two rows of
$\mathbf{X}_{\overline{\mathpzc{E}}}$. With $R>1/2$, $|\mathpzc{E}|
\geq n/2$, and sufficiently large $n$,
\begin{equation} 
2^{nR}-2>2^{(n-|\mathpzc{E}|)}-1.
\end{equation}
And the misidentification will happen with probability at least
$1/3$. From above, we know that if $R>1/2$, the misidentification
probability will be larger than $\delta/3$ for sufficiently large
$n$ which
concludes the proof. \\
\end{proof}

Intuitively, the i.i.d generated codebook will result in
$|\mathpzc{E}| \approx n/2$ number of erased positions with high
probability \cite{Barg_cap}. Then the ``channel'' between one of
the pirates $\mathbf{x}_1$ and the forged fingerprint $\mathbf{y}$
can be approximated by a binary erasure channel (BEC) with erasure
probability 1/2. From \cite{Book_Cover}, we know that the capacity
using the MD decoder of this channel is 1/2. However, in the
two-pirate fingerprinting system, there are always two codewords
$\mathbf{x}_1$ and $\mathbf{x}_2$ in the codebook which meet the
MD decoding criteria. This is the fundamental difference between
this system and the classical BEC channel. In the BEC channel,
with high probability, only one codeword will meet the MD decoding
criteria. As will be presented in Section \ref{Sec_Avg_Simu}, this
difference will have an important implication on the design of
Belief Propagation decoders for fingerprinting. The following
result shows that restricting ourselves to the class of linear
fingerprinting does not entail any performance loss (at least from
an information theoretic perspective) \\
\begin{theorem} \label{Theorem_Avg_linear}
The fingerprinting capacity of the binary linear ensemble with
$t=2$ is $R_{\cal E}=0.5$ (under the averaging attack and the MD
decoder). \\
\end{theorem}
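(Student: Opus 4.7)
The plan is to mirror the proof of Theorem~\ref{Theorem_Avg_iid}, replacing the i.i.d.\ codebook with a uniformly random binary linear (equivalently, random affine) ensemble and showing that the only new source of dependence, linear combinations among codewords, produces exactly one exceptional innocent user whose contribution can be bounded separately. I would define the ensemble by drawing a uniformly random $k \times n$ binary generator matrix $G$ with $k = \lceil nR \rceil$ together with (to make each individual codeword uniform) an independent uniform offset $\mathbf{v}\in\{0,1\}^n$; assign each user a distinct non-zero message $\mathbf{m}_u$ and codeword $\mathbf{x}_u = \mathbf{m}_u G \oplus \mathbf{v}$; then map to $\{-1,+1\}$ and use the same erasure-based MD decoder as in Theorem~\ref{Theorem_Avg_iid}.

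The key structural observation is that any two users' messages are distinct and non-zero, hence linearly independent over GF$(2)$, so the pair $(\mathbf{x}_1,\mathbf{x}_2)$ is jointly uniform on $\{0,1\}^n \times \{0,1\}^n$. This immediately recycles the close-pair bound of Appendix~\ref{App_close_pair} to give $|\mathpzc{E}| \leq n(1/2+\varepsilon)$ with overwhelming probability, and also supplies the converse input $P(|\mathpzc{E}| \geq n/2) \geq \delta$ for a constant $\delta>0$. For any innocent message $\mathbf{m}_3 \neq \mathbf{m}_1 \oplus \mathbf{m}_2$, the triple $\{\mathbf{m}_1,\mathbf{m}_2,\mathbf{m}_3\}$ is linearly independent, so $\mathbf{x}_3$ is uniform and independent of $(\mathbf{x}_1,\mathbf{x}_2)$ and matches $\mathbf{y}_{\overline{\mathpzc{E}}}$ with probability $2^{-(n-|\mathpzc{E}|)}$; a union bound over these at most $2^{nR}$ users contributes $2^{-n(1/2-\varepsilon-R)}$, which vanishes for $R<1/2$.

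The one new obstacle, and the main subtlety of this proof, is the single exceptional innocent user whose message satisfies $\mathbf{m}_3 = \mathbf{m}_1 \oplus \mathbf{m}_2$, which forces $\mathbf{x}_3 = \mathbf{x}_1 \oplus \mathbf{x}_2$ deterministically given the pirates. A short per-coordinate calculation in the $\{-1,+1\}$ representation shows that $\mathbf{x}_3$ is identically $-1$ at every unerased position, so misidentification via this user requires every unerased coordinate to carry the common value $-1$; each of the four $(b_{1i},b_{2i})$ realizations has probability $1/4$ and only the realization $(1,1)$ destroys the match, yielding probability $(3/4)^n$, which decays exponentially. Combining these pieces, the total misidentification probability is bounded above by $2^{-n(1/2-\varepsilon-R)} + (3/4)^n + o(1)$, vanishing for any $R<1/2$. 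For the converse, the pigeonhole argument of Theorem~\ref{Theorem_Avg_iid} is essentially blind to the code structure: under $|\mathpzc{E}| \geq n/2$ and $R>1/2$, the array $\mathbf{X}_{\overline{\mathpzc{E}}}$ still has more rows than distinct binary vectors of width $n-|\mathpzc{E}|$, so some row other than the two pirate rows must coincide with $\mathbf{y}_{\overline{\mathpzc{E}}}$, giving $P_m \geq \delta/3$ for large $n$.
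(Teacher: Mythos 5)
Your proposal is correct in substance but takes a genuinely different route from the paper. The paper works with the parity-check ensemble (a random $l\times n$ matrix $H$), uses a syndrome decoder, and observes that misidentification occurs precisely when the random submatrix $H_{\mathpzc{E}}$ has rank below $|\mathpzc{E}|-1$; achievability then rests on an explicit computation of the rank distribution of random binary matrices (Appendix~\ref{App_Matrix} and the counting formulas of \cite{di2002finite}), and the converse on the deterministic rank deficiency that occurs whenever $|\mathpzc{E}|-1>l$. You instead use a generator-matrix (coset) ensemble and reduce everything to pairwise independence of codewords plus a union bound --- essentially the strategy the paper itself adopts later, via Gallager's coset-code lemmas, in the proof of Theorem~\ref{Theorem_Linear}. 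The price is the single dependent message $\mathbf{m}_3=\mathbf{m}_1\oplus\mathbf{m}_2$, which you correctly isolate; your $(3/4)^n$ bound for it is right (a mismatch on the unerased set is avoided only if no coordinate of the pirate pair is $(1,1)$). Your route is more elementary and avoids the matrix-counting machinery; the paper's route yields the rank-deficiency probability exactly and connects directly to the stopping-set discussion of Section~\ref{Sec_BP}.

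Two repairs are needed. First, you are inconsistent about the offset: with the additive key $\mathbf{v}$ included, $\mathbf{x}_3=\mathbf{x}_1\oplus\mathbf{x}_2\oplus\mathbf{v}$ is \emph{not} determined by the pirates --- conditioned on $(\mathbf{x}_1,\mathbf{x}_2)$ it is still uniform, so the exceptional user needs no special treatment at all; your $(3/4)^n$ computation is the correct one for the pure linear (no-offset) ensemble. Either version works, but commit to one. Second, the converse's pigeonhole as stated is a non sequitur: having more non-pirate rows than width-$(n-|\mathpzc{E}|)$ vectors forces a collision among rows, not a collision with the specific value $\mathbf{y}_{\overline{\mathpzc{E}}}$. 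For linear codes the conclusion is nevertheless true because the restriction $\mathbf{x}\mapsto\mathbf{x}_{\overline{\mathpzc{E}}}$ is a linear map, so all of its nonempty fibers have equal cardinality at least $2^{|\mathpzc{E}|-l}$; the fiber over $\mathbf{y}_{\overline{\mathpzc{E}}}$ contains the pirates and hence at least four codewords once $|\mathpzc{E}|\geq l+2$. Either say that explicitly or fall back on the paper's rank argument.
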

\begin{proof}
We consider the ensemble of  binary linear codes of length $n$ and
dimension $n-l$ defined by the $l \times n $ parity check matrix
$H$, where each entry of $H$ is an
i.i.d Bernoulli random variable with parameter $1/2$. The code rate $R=1-l/n$.\\

\noindent \textbf{Encoder:}  The encoder chooses one codebook from
this linear code ensemble, transfers the fingerprinting codeword
alphabets from $\{0,1\}$ to $\{-1,+1\}$, and assigns the
fingerprints to the users.
\\ \\
\noindent \textbf{Decoder:} With the given forged fingerprint
$\mathbf{y}$, again the decoder treats the position $i$ where
$\mathbf{y}_i=0$ as an erased position, and the others as unerased
positions. The decoder will also transfer the alphabets of
unerased positions from $\{-1,+1\}$ back to $\{0,1\}$. Let
$H_\mathpzc{E}$ denote the submatrix of $H$ that consists of those
columns of $H$ which are indexed by the set of erasures
$\mathpzc{E}$. In a similar manner, let $\mathbf{x}_\mathpzc{E}$
denote those components of the pirate's fingerprint which are
indexed by $\mathpzc{E}$, and
$\mathbf{x}_{\overline{\mathpzc{E}}}$ denote those components
which are indexed by $\overline{\mathpzc{E}}$. In the following,
we assume that the fingerprinting codeword alphabets are
transferred back to $\{0,1\}$ and the addition is module-2. Note
that the true pirates $\mathbf{x}_1$ and $\mathbf{x}_2$ will
result in the same
$\mathbf{x}_{\overline{\mathpzc{E}}}=\mathbf{y}_{\overline{\mathpzc{E}}}$,
where $\mathbf{y}_{\overline{\mathpzc{E}}}$ is defined as in
Theorem \ref{Theorem_Avg_iid}. From the parity check equations,
\begin{equation} \label{Eq_BEC_syndrome}
H_\mathpzc{E}\mathbf{x}^{\mathrm{T}}_\mathpzc{E}=\mathbf{s}^{\mathrm{T}},
\end{equation}
where
$\mathbf{s}^{\mathrm{T}}:=H_{\overline{\mathpzc{E}}}\mathbf{y}^{\mathrm{T}}_{\overline{\mathpzc{E}}}$
is called the syndrome. The syndrome is known at the decoder. The
decoder solves these linear equations to find
$\mathbf{x}_\mathpzc{E}$, combines it with the known
$\mathbf{x}_{\overline{\mathpzc{E}}}=\mathbf{y}_{\overline{\mathpzc{E}}}$,
and declares one of the results as the pirate.
\\ \\
Achievability: We know that (\ref{Eq_BEC_syndrome}) has at least
two solutions corresponding to the true pirates $\mathbf{x}_1$ and
$\mathbf{x}_2$. The rank of $l \times |\mathpzc{E}|$ matrix
$H_\mathpzc{E}$ must equal to $|\mathpzc{E}|-1$ to make sure that
there is only two solutions. The decoder will declare an innocent
user as the pirate if there are more than two solutions, iff
$H_\mathpzc{E}$ has rank less than $|\mathpzc{E}|-1$. This happens
with probability
\begin{equation} \label{Eq_rankless}
1-\frac{M_b(l,|\mathpzc{E}|,|\mathpzc{E}|-1)}{2^{l|\mathpzc{E}|}-M_b(l,|\mathpzc{E}|,|\mathpzc{E}|)},
\end{equation}
where $M_b(l_1,m_1,k_1)$ denote the number of binary matrices with
dimension $l_1 \times m_1$ and rank $k_1$.

To make (\ref{Eq_rankless}) approach zero as $n$ increases, the
second term in (\ref{Eq_rankless}) must approach one as $n$
goes up. To show this, we first assume that $|\mathpzc{E}|+n
\epsilon_1 \leq l$, where $\epsilon_1>0$ is a small number. And
according to (\ref{Eq_NumEminus1}) in Appendix \ref{App_Matrix}
and \cite{di2002finite}, the second term in (\ref{Eq_rankless})
equals
\begin{equation} \label{Eq_ranklessdev1}
\frac{M_b(|\mathpzc{E}|-1,l,|\mathpzc{E}|-1)(2^{|\mathpzc{E}|}-1)}{2^{l|\mathpzc{E}|}-M_b(|\mathpzc{E}|,l,|\mathpzc{E}|)}.
\end{equation}
From \cite{di2002finite}, for $j=0 \ldots |\mathpzc{E}|-1 $
\begin{align} \label{Eq_NumFullRank}
&M_b(|\mathpzc{E}|-j,l,|\mathpzc{E}|-j)=
\prod_{p=0}^{|\mathpzc{E}|-j-1}(2^l-2^p).
\end{align}
Using this formula in (\ref{Eq_ranklessdev1}) and dividing the
nominator and denominator by
$M_b(|\mathpzc{E}|-1,l,|\mathpzc{E}|-1)$, this term equals
\begin{equation} \label{Eq_ranklessdev2}
\frac{2^{|\mathpzc{E}|}-1}{2^{(|\mathpzc{E}|-1)}+2^l[-1+\prod_{p=0}^{|\mathpzc{E}|-2}1/(1-2^{p-l})]}.
\end{equation}
Note that $n\epsilon_1 \leq l-|\mathpzc{E}|$, each $2^{p-l}$
approaches zero exponentially fast with $n$. By using Taylor
series on $1/(1-2^{p-l})$, and with some simplifications, the
denominator becomes
\begin{equation} 
2^{(|\mathpzc{E}|-1)}+\sum_{p=0}^{|\mathpzc{E}|-2}2^{p}+2^{|\mathpzc{E}|}*h.o.t.=2^{|\mathpzc{E}|}*(1+h.o.t.)-1,
\end{equation}
where the higher order terms of the Taylor series are denoted by $h.o.t$ and approach zero exponentially fast. Using this result in
(\ref{Eq_ranklessdev2}), our claim is valid and
(\ref{Eq_rankless}) approaches zero as $n \rightarrow \infty$
if $|\mathpzc{E}|+n \epsilon_1 \leq l$.

As shown in Appendix \ref{App_Theorem_Linear}, $|\mathpzc{E}| \leq
n(1/2+\epsilon)$ with high probability, we know that if
$n(1/2+\epsilon)+n \epsilon_1 \leq l$, or
$R<1/2-(\epsilon+\epsilon_1)$, the probability of
misidentification can be made arbitrary small.
\\ \\
Converse: From (\ref{Eq_linear_dis}) in Appendix, we know that
$P(|\mathpzc{E}| \geq n/2)>P(|\mathpzc{E}|=n/2)=\delta$, where
$\delta$ is non vanishing with respect to codeword length $n$.
With $R>1/2$ and sufficiently large $n$, $P(|\mathpzc{E}|-1 > l)
\geq \delta$. In this case, the rank of $H_{\mathpzc{E}}$ is less
than $|\mathpzc{E}|-1$ and the syndrome decoder will find at least
three solutions of equation (\ref{Eq_BEC_syndrome}). The
misidentification will happen with probability at least $1/3$
since. From above, we know that if $R>1/2$, the probability will
be larger than $\delta/3$ for sufficiently large $n$ and it
concludes the proof. \\
\end{proof}

Next, our approach is generalized to coalitions with $t>2$. The
key to the following corollary  is to treat all alphabets other
than $\pm 1$ in $\mathpzc{A}_{\mathbf{y}}$ of (\ref{Eq_g_avg}) as
erasures.
\\
\begin{corollary} 
The rate $\frac{1}{2^{(t-1)}}$ is MD-achievable for fingerprinting
under
average attack with a coalition of size $t$. \\
\end{corollary}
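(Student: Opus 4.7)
The plan is to mimic the i.i.d.~codebook argument of Theorem~\ref{Theorem_Avg_iid}, replacing the BEC-like erasure probability $1/2$ (which arose for $t=2$) by the correct generalization for arbitrary $t$. With $\pm 1$ codewords drawn i.i.d.~Bernoulli$(1/2)$ and an averaging attack $\mathbf{y}=\frac{1}{t}\sum_{i=1}^{t}\mathbf{x}_i$, a coordinate of $\mathbf{y}$ equals $+1$ or $-1$ exactly when all $t$ pirates agree on that bit; otherwise the coordinate takes some intermediate value in $\{-1+\frac{2}{t},\ldots,1-\frac{2}{t}\}$. Following the hint, the decoder treats every coordinate with $\mathbf{y}_i\notin\{\pm 1\}$ as an erasure and, on the unerased set $\overline{\mathpzc{E}}$, declares as pirate any codeword that matches $\mathbf{y}_{\overline{\mathpzc{E}}}$.

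First I would compute the single-coordinate probability of being unerased: since the $t$ pirate bits are i.i.d.~uniform, the chance that they are all $+1$ or all $-1$ is $2\cdot(1/2)^{t}=2^{-(t-1)}$. Hence $\mathbb{E}[n-|\mathpzc{E}|]=n\cdot 2^{-(t-1)}$. Next, a concentration step: since $n-|\mathpzc{E}|$ is a sum of $n$ i.i.d.~Bernoulli$(2^{-(t-1)})$ random variables, a standard Chernoff (or binomial tail) bound, completely analogous to the close-pair argument of Appendix~\ref{App_close_pair}, shows that for any $\epsilon>0$,
\begin{equation}
\Pr\!\Bigl(n-|\mathpzc{E}|\leq n\bigl(2^{-(t-1)}-\epsilon\bigr)\Bigr)\longrightarrow 0
\end{equation}
exponentially fast in $n$. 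So with high probability the decoder has at least $n(2^{-(t-1)}-\epsilon)$ unerased coordinates to work with.

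Now I would bound the misidentification probability. Conditioned on $\mathbf{y}$ and $|\mathpzc{E}|$, an innocent user's codeword $\mathbf{z}$ is independent of $(\mathbf{x}_1,\ldots,\mathbf{x}_t)$ and uniform on $\{-1,+1\}^n$, so $\Pr(\mathbf{z}_{\overline{\mathpzc{E}}}=\mathbf{y}_{\overline{\mathpzc{E}}})=2^{-(n-|\mathpzc{E}|)}$. A union bound over the (at most) $M=2^{nR}$ innocent users gives
\begin{equation}
P_m \;\leq\; 2^{nR}\cdot 2^{-n(2^{-(t-1)}-\epsilon)} + o(1),
\end{equation}
where the $o(1)$ absorbs the probability of the atypical concentration event above. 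This tends to zero whenever $R<2^{-(t-1)}-\epsilon$, and since $\epsilon>0$ is arbitrary we conclude that every rate below $1/2^{t-1}$ is MD-achievable.

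The only genuinely non-routine step is the concentration of $|\mathpzc{E}|$ around its mean for general $t$; but since the indicators of ``all-$t$-pirates-agree'' at distinct coordinates are mutually independent under the i.i.d.~ensemble, this reduces to a textbook Bernoulli tail bound of the same flavor as the $t=2$ derivation used for Theorem~\ref{Theorem_Avg_iid}. Everything else (the encoder/decoder, the consistency check, the union bound) is a verbatim copy of the proof of Theorem~\ref{Theorem_Avg_iid} with $1/2$ replaced by $2^{-(t-1)}$, which is why the corollary follows with so little extra work.
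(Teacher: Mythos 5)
Your proof is correct and follows essentially the same route as the paper: treat every coordinate with $\mathbf{y}_i\notin\{\pm 1\}$ as an erasure, show $|\mathpzc{E}|\leq n(1-2^{-(t-1)}+\epsilon)$ with high probability, and finish with the union bound from Theorem~\ref{Theorem_Avg_iid}. The only difference is that you make explicit (via a direct Chernoff bound on the i.i.d.\ per-coordinate agreement indicators) the concentration step that the paper delegates to a citation of~\cite{Barg_cap}, which is a perfectly valid and arguably cleaner way to justify that step.
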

\begin{proof}
The encoder/decoder are the same as the ones in Theorem
\ref{Theorem_Avg_iid} except for the choices of erasure positions as
described previously. Note that $\mathbf{y}_i \neq \pm 1$ whenever
the pirates' fingerprints bits are not the same at position $i$.
Similar to \cite{Barg_cap}, we know that with high probability, the
i.i.d generated codebooks will meet
\[
|\mathpzc{E}| \leq n\left \{1-\frac{1}{2^{(t-1)}}+\epsilon \right
\}.
\]
Then, following in the footsteps of the proof of Theorem
\ref{Theorem_Avg_iid} we obtain our result.
\end{proof}
\[\;\]

The advantage of the MD decoder, used to obtain the previous result,
is the universality for all $t$. However, for each $t$, we can
obtain higher rates by tailoring our encoder/decoder to this
specific case. To illustaret the idea, let's consider the $t=3$
case. Now, $\mathpzc{A}_{\mathbf{y}}=\{ \pm 1, \pm \frac{1}{3} \}$
and one can achieve better performance by exploiting the information
contained in the positions with $\mathbf{y}_i = \pm \frac{1}{3}$.
\\
\begin{theorem} \label{Theorem_3tJT}
The rate
$H(\frac{1}{8},\frac{1}{8},\frac{3}{8},\frac{3}{8})-H(\frac{1}{4},\frac{1}{2},\frac{1}{4})=0.3113$
is achievable for fingerprinting under average attack with $t=3$.
\\
\end{theorem}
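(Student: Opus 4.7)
The plan is to upgrade the erasure-based MD decoder used in Theorem~\ref{Theorem_Avg_iid} to a joint-typicality decoder that exploits \emph{all} four values in $\mathpzc{A}_{\mathbf{y}}=\{\pm 1,\pm 1/3\}$, and then run a standard random coding argument over the same i.i.d.\ uniform $\{-1,+1\}^n$ codebook ensemble. The headline rate should emerge as the mutual information between one pirate's input and the output of the induced ternary-averaging ``channel''.

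The first step is to compute the single-letter statistics. With three i.i.d.\ uniform $\pm 1$ pirate symbols, the average $Y=(X_1+X_2+X_3)/3$ takes the values $\pm 1$ with probability $1/8$ each and $\pm 1/3$ with probability $3/8$ each, so $H(Y)=H(1/8,1/8,3/8,3/8)$. Conditioning on $X_1=+1$ (the case $X_1=-1$ is symmetric), the partial sum $X_2+X_3$ equals $-2,0,+2$ with probabilities $(1/4,1/2,1/4)$, hence $H(Y\mid X_1)=H(1/4,1/2,1/4)$. Subtracting gives exactly the claimed rate $0.3113$ as $I(X_1;Y)$.

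The second step is the random coding analysis. The decoder, on receiving $\mathbf{y}$, declares any codeword $\mathbf{z}$ whose empirical joint type with $\mathbf{y}$ is within $\epsilon$ of the joint law $p(x)p(y\mid x)$ computed above. For any pirate $\mathbf{x}_i$, the triples $(X_{1k},X_{2k},X_{3k})$ are i.i.d.\ across $k$ and $Y_k$ is their symbol-by-symbol average, so the weak law of large numbers makes $(\mathbf{x}_i,\mathbf{y})$ jointly $\epsilon$-typical with probability tending to one. For any innocent user $\mathbf{z}$, the codeword is independent of the coalition and hence of $\mathbf{y}$, so the joint typicality lemma yields $\Pr\{(\mathbf{z},\mathbf{y})\in T_\epsilon\}\le 2^{-n(I(X;Y)-\delta(\epsilon))}$. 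A union bound over the at most $2^{nR}$ innocent codewords drives the misidentification probability to zero whenever $R<I(X;Y)-\delta(\epsilon)$.

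The main (minor) obstacle is bookkeeping rather than substance: one must check that the decoder's output really sits in the coalition, which follows because, at the chosen rate, with overwhelming probability the jointly typical set contains no innocent codewords and at least one pirate codeword is typical. Ties among pirates are harmless because the performance measure $\overline{P}^a_m$ only requires identifying \emph{one} member of $U$. Note that this approach strictly beats the MD-with-erasures rate $1/2^{t-1}=1/4$ obtained for $t=3$ in the preceding corollary, precisely because positions with $y_i=\pm 1/3$ are no longer discarded but instead feed nontrivial information into the typicality test.
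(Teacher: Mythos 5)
Your proposal is correct and follows essentially the same route as the paper: the same i.i.d.\ uniform $\{-1,+1\}^n$ ensemble, the same induced channel law $P(Y\mid X)$ with $H(Y)=H(\frac{1}{8},\frac{1}{8},\frac{3}{8},\frac{3}{8})$ and $H(Y\mid X)=H(\frac{1}{4},\frac{1}{2},\frac{1}{4})$, and the same joint-typicality decoder with a union bound over innocent codewords yielding achievability of $I(X;Y)$. The paper's error decomposition $P_m\leq P(E_1^c)+P(E_2^c)+P(E_3^c)+\sum_{i\neq 1,2,3}P(E_i)$ is just a slightly more explicit bookkeeping of the same two failure modes you identify.
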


\begin{proof} The encoder is the same as Theorem
\ref{Theorem_Avg_iid}. As for the decoder, we first define $X$ as a random variable with
$P(X=\pm 1)=1/2$, and the random variable $Y=(X+X_2+X_3)/3$, where
$X_2,X_3$ has the same distribution as $X$ and $(X,X_2,X_3)$ are
independent. The transition matrix of $P(Y|X)$ is
\begin{table}[ht]
\begin {center}
\begin{tabular}{c|cccc}
$_{X} \backslash ^{Y} $&-1&-1/3&1/3&1\\ \hline
-1&$\frac{1}{4}$&$\frac{1}{2}$&$\frac{1}{4}$&0\\
1&0&$\frac{1}{4}$&$\frac{1}{2}$&$\frac{1}{4}$. \\
\end{tabular}
\end{center}
\end{table}

Typically, we need a maximum likelihood (ML) decoder designed for
the transition matrix $P(Y|X)$. Note that when $t=2$, this decoder
reduces to the one specified in Theorem \ref{Theorem_Avg_iid}.
However, it is hard to investigate the performance of the ML
decoder, and we use the jointly-typical decoder defined in
\cite{Book_Cover} as a lower-bound for the achievable rate of this
decoder. Given a forged fingerprint $\mathbf{y}$, the decoder
search the codebook to find the codeword such that this codeword
and $\mathbf{y}$ are jointly-typical with respect to $P(X,Y)$ .
Once the decoder finds such a codeword, the decoder declares it as
the pirate.
\\ \\
Achievability : Without loss of generality, we can assume that the
pirates indices are $(1,2,3)$. An event $E_i$ occurs when the
$i$th codeword and $\mathbf{y}$ are jointly typical, and the event
$E^c_i$ is its complement. Then the probability of
misidentification $P_m$ is upper-bounded by
\[
P_m \leq P(E^c_1)+P(E^c_2)+P(E^c_3)+\sum_{i \neq 1,2,3} P(E_i).
\]

From \cite[Theorem 15.2.1]{Book_Cover}, the first three terms can be made less than any arbitrary small $\epsilon>0$ for sufficiently large $n$. And
the last term is upper-bounded by
\[
(M-3) 2^{-n(I(X;Y)-4\epsilon)},
\]
So if $R<I(X;Y)-4\epsilon$, $P_m$ can be made arbitrary small for
sufficiently large $n$. According to the transition matrix of
$P(Y|X)$, we know that
\[
I(X;Y)=H(\frac{1}{8},\frac{1}{8},\frac{3}{8},\frac{3}{8})-H(\frac{1}{4},\frac{1}{2},\frac{1}{4}),
\]
which concludes the proof
\end{proof}

\section{The Marking Assumption} \label{Sec_Theorem_2}
Having studied the special case of averaging attack, we now
proceed to the case when the coalition can employ \emph{any}
strategy as long as the marking assumption is satisfied. The
following result establishes the achievable rate of random
fingerprinting codes with MD decoding \\
\begin{theorem} \label{Theorem_Random}
For all rates less than $1-H(0.25)$ there exists an MD-achievable
fingerprinting code, when $t=2$. \\
\end{theorem}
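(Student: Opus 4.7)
The plan is to mimic the random coding approach of Theorem \ref{Theorem_Avg_linear}, now adapted to the marking assumption, and to combine it with a Hamming-ball volume estimate that produces the $1-H(0.25)$ threshold.

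First I would establish the distance identity that characterizes any marking-assumption attack with $t=2$. At every undetectable coordinate $y_i=x_{1i}=x_{2i}$, while at every detectable coordinate $y_i$ agrees with exactly one of $\{x_{1i},x_{2i}\}$. Summing over all coordinates,
\[
d_H(\mathbf{y},\mathbf{x}_1)+d_H(\mathbf{y},\mathbf{x}_2)=d_H(\mathbf{x}_1,\mathbf{x}_2)=:d,
\]
so $\min_i d_H(\mathbf{y},\mathbf{x}_i)\le d/2$. Consequently, misidentification by the MD decoder requires an innocent codeword $\mathbf{z}$ inside the Hamming ball $B(\mathbf{y},d/2)$. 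A standard Chernoff bound further guarantees $d\le n(1/2+\epsilon)$ with probability $1-2^{-\Omega(n)}$, so it suffices to bound the number of innocent codewords inside balls of radius at most $n(1/4+\epsilon/2)$.

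The main step is a first-moment bound. Because the distributor's codebook is kept hidden from the pirates via the secret key, the adversary's $\mathbf{y}$ is a function of the pair $(\mathbf{x}_1,\mathbf{x}_2)$ alone and is independent of the rest of the codebook. Conditioning on $(\mathbf{x}_1,\mathbf{x}_2,\mathbf{y})$, each innocent user's codeword lies in $B(\mathbf{y},d/2)$ with probability at most $V(n,d/2)/2^n$, so by linearity of expectation
\[
\mathbb{E}\bigl[\#\{\text{innocent codewords in }B(\mathbf{y},d/2)\}\bigr]\le M\cdot\frac{V(n,n(1/4+\epsilon/2))}{2^n}\le 2^{n[R+H(1/4+\epsilon/2)-1]},
\]
which vanishes exponentially for any $R<1-H(1/4)$ once $\epsilon$ is small enough. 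Markov's inequality converts this into a vanishing misidentification probability, and averaging over the coalition $U$ yields $\overline{P}_m\to 0$, meeting the MD-achievability criterion of Definition \ref{def_capacity}.

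The main obstacle I expect is the codeword $\mathbf{x}_1\oplus\mathbf{x}_2$, which is deterministically present in any plain linear ensemble and which, whenever $d>n/2$, can be forced into $B(\mathbf{y},d/2)$ by an adversary that sets $y_i=1$ at every detectable position; this spoils the first-moment bound by contributing a non-vanishing $O(1)$ term. This pathology is removed either by adding a uniformly random coset shift to the linear ensemble (so that $\mathbf{x}_1\oplus\mathbf{x}_2$ is no longer forced to be a codeword) or by replacing the linear ensemble with the i.i.d.\ codebook of Theorem \ref{Theorem_Avg_iid}; the first-moment calculation above is unaffected by either modification.
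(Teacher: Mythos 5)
Your proposal is correct and follows essentially the same route as the paper: an i.i.d.\ random codebook, concentration of $d_H(\mathbf{x}_1,\mathbf{x}_2)$ near $n/2$, the marking-assumption distance identity forcing one pirate within Hamming distance $n(1/4+\epsilon/2)$ of $\mathbf{y}$, and a ball-volume union bound (your first-moment-plus-Markov step is the same calculation) giving the $1-H(0.25)$ threshold. Your closing remark about $\mathbf{x}_1\oplus\mathbf{x}_2$ in plain linear ensembles is not needed here since this theorem uses the i.i.d.\ codebook, but it correctly anticipates the coset-code device the paper uses in the subsequent linear-code version of the result.
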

\begin{proof}
We use a random coding argument to prove our result. We construct
the following ensemble of binary random codes as in Theorem
\ref{Theorem_Avg_iid}: Binary random vectors (fingerprints) of
length $n$ are assigned to the $M=2^{nR}$ users where each
coordinate is chosen independently with equal probability of being
$0,1$. For a small $\varepsilon$, we say the assigned fingerprints
$\mathbf{x}_1,\mathbf{x}_2$ are close if
$d_H(\mathbf{x}_1,\mathbf{x}_2)\leq n(\frac{1}{2}+\varepsilon)$.
If the pair $(\mathbf{x}_1,\mathbf{x}_2)$ is close we denote it by
$\mathbf{x}_1\overset{C}{\leftrightarrow} \mathbf{x}_2$, otherwise
for a non-close pair we write:
$\mathbf{x}_1\overset{N}{\leftrightarrow} \mathbf{x}_2$. Given a
forged fingerprint $\mathbf{y}$, the average probability of
misidentification over this ensemble can be upper bounded by:
\[
P_m(\mathbf{y}|\mathbf{x}_1 \overset{C}{\leftrightarrow}
\mathbf{x}_2)+P(\mathbf{x}_1 \overset{N}{\leftrightarrow}
\mathbf{x}_2),
\]
where $P_m(\mathbf{y}|\mathbf{x}_1
\overset{C}{\leftrightarrow}\mathbf{x}_2)$ is the
misidentification probability when $\mathbf{y}$ is produced by a
close pair $(\mathbf{x}_1,\mathbf{x}_2)$ and $P(\mathbf{x}_1
\overset{N}{\leftrightarrow} \mathbf{x}_2)$ is the probability
that the pirates did not constitute a close pair. Both probability
are averaged over the random coding ensemble. By the following
argument, we will show that these probabilities goes exponentially
to zero as $n$ goes to infinity hence the proof.

In Appendix \ref{App_close_pair} we have proved that
$P(\mathbf{x}_1 \overset{N}{\leftrightarrow} \mathbf{x}_2)$ goes
to zero as $n$ goes to infinity. Now we turn to
$P_m(\mathbf{y}|\mathbf{x}_1 \overset{C}{\leftrightarrow}
\mathbf{x}_2)$. Since
$d_H(\mathbf{x}_1,\mathbf{x}_2)<n(\frac{1}{2} + \varepsilon )$,
the Hamming distance of the forged copy $\mathbf{y}$ with at least
one of the pirates must be less than $h(n) := n(\frac{1}{4} +
\frac{\varepsilon}{2})$ due to the marking assumption. Without
loss of generality, we assume this pirate to be $\mathbf{x}_1$.
Using minimum Hamming distance decoding, misidentification occurs
if there is another binary vector $\mathbf{z}$ of length $n$ in
the codebook such that $d_H(\mathbf{y},\mathbf{z}) \leq
d_H(\mathbf{y},\mathbf{x}_1)$. The total probability of this event
in the random ensemble is upper-bounded by
\[
\frac{M\sum_{i=1}^{h(n)}{n\choose i}}{2^n}\doteq
M*2^{-n(1-H(0.25))},
\]
where the union bound is used. The probability of
misidentification in a random code of size $M=2^{nR}$ is at most
\[
2^{-n(1-H(0.25)-R)}.
\]
The above probability goes exponentially to zero as $n\rightarrow
\infty$ for all rates $R<1-H(0.25)$.  $\;$ \\
\end{proof}
Intuitively, with a high probability, the forged copy will be
produced by a pair of {\em close} pirates. Therefore, the minimum
Hamming distance between the pirates $\mathbf{x}_1$ and the forged
copy $\mathbf{y}$ is approximately $n/4$ implying that we can
treat the "channel" between them as a binary symmetric channel
(BSC) with crossover probability $1/4$ (whose capacity is
$1-H(0.25)$ \cite{Book_Cover}). Next, we extend our result to
binary linear codes \\

\begin{theorem} \label{Theorem_Linear}
For all rates less than $1-H(0.25)$, there exists a \emph{linear}
MD-achievable fingerprinting code, when $t=2$.
\\
\end{theorem}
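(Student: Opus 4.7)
The plan is to mirror the random-coding proof of Theorem \ref{Theorem_Random} after replacing the i.i.d.\ ensemble with the random linear ensemble of Theorem \ref{Theorem_Avg_linear}. Specifically, I would draw an $l\times n$ parity-check matrix $H$ with i.i.d.\ Bernoulli($1/2$) entries, obtain a random linear code of rate $R=1-l/n$, and assign each of the $M=2^{nR}$ users to a distinct codeword; the decoder remains the MD decoder on $\{0,1\}^n$, exactly as in Theorem \ref{Theorem_Random}.

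The structural ingredient that makes the analysis go through is that the linear ensemble is \emph{pairwise uniform}: for any two distinct nonzero message vectors $\mathbf{u}_i,\mathbf{u}_j$, the codeword pair $(\mathbf{c}_i,\mathbf{c}_j)$ is uniformly distributed on $\{0,1\}^n\times\{0,1\}^n$. Since the close-pair bound $P(\mathbf{x}_1\overset{N}{\leftrightarrow}\mathbf{x}_2)\to 0$ from Appendix \ref{App_close_pair} depends only on this pairwise marginal, it transfers verbatim. Conditional on a close pair, I would split the $M-2$ innocent codewords by the linear relation of their underlying message with $\{\mathbf{u}_1,\mathbf{u}_2\}$. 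For every innocent user whose message is linearly independent of $\{\mathbf{u}_1,\mathbf{u}_2\}$, its codeword is uniform on $\{0,1\}^n$ and independent of the pirate pair, so the BSC union bound from the proof of Theorem \ref{Theorem_Random} applies and gives a contribution of at most $M\cdot 2^{-n(1-H(0.25))}$, exponentially small whenever $R<1-H(0.25)$.

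The main obstacle is the single innocent user whose message equals $\mathbf{u}_1+\mathbf{u}_2$: by linearity its codeword is the deterministic vector $\mathbf{c}_1+\mathbf{c}_2$, and a careful worst-case analysis under the marking assumption shows that a direct union bound cannot control the probability that this ``sum codeword'' lies within Hamming distance $h(n)=n(\tfrac{1}{4}+\tfrac{\varepsilon}{2})$ of the forged copy $\mathbf{y}$. I would neutralize this residual dependence by augmenting the ensemble with an independent uniform shift $\mathbf{v}\in\{0,1\}^n$, assigning user $i$ to the codeword $\mathbf{u}_i G+\mathbf{v}$. The shifted sum codeword is then $\mathbf{c}_1+\mathbf{c}_2+\mathbf{v}$, which by the one-time-pad property of $\mathbf{v}$ is uniformly distributed and independent of $(\mathbf{c}_1,\mathbf{c}_2)$; the union bound above therefore applies to all $M-2$ innocent codewords on equal footing. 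Standard de-randomization over $(H,\mathbf{v})$ then extracts a specific (coset of a) linear code of rate approaching $1-H(0.25)$ with exponentially vanishing misidentification probability, establishing the claim.
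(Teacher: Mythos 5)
Your proposal is correct and is essentially the paper's own proof: the independent uniform shift $\mathbf{v}$ you introduce to neutralize the sum codeword $\mathbf{c}_1+\mathbf{c}_2$ is exactly the secret-key vector $\mathbf{k}$ the paper adds to form the coset ensemble $\mathbf{u}G+\mathbf{k}$, whose uniformity and pairwise-independence properties (Gallager's coset-code lemmas) yield the same union bound $M\cdot 2^{-n(1-H(0.25))}$ after the identical close-pair decomposition. The only cosmetic difference is that you arrive at the shift by first exhibiting the obstruction in the plain linear ensemble at the user with message $\mathbf{u}_1+\mathbf{u}_2$, whereas the paper introduces the key as codebook randomization from the outset and cites the coset-code lemmas directly.
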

\begin{proof}
Consider the ensemble of binary linear codes with binary parity
generator matrix $G$ where elements of $G$ are chosen equally and
independently from $\{0,1\}$ similar to Theorem
\ref{Theorem_Avg_linear}. The size of matrix $G$ is $(n-l) \times
n$, with rate $R=(n-l)/n$ and the codeword length $n$. It should
also be noted that in the following all matrix multiplications and
additions are done in module-2 unless otherwise stated. In order
to randomize the codebook, the distributor employs the following
strategy: Generating the secret key vectors as independent binary
random vectors of length $n$, whose coordinates are chosen to be
$0,1$ independently with probability 1/2. We denote the vector
indexed by secret key $k$ as $\mathbf{k}$. The vector $\mathbf{k}$
is added in the binary domain to the codeword, and the resulting
vector is assigned to the corresponding user. Note that this
operation will not change the detectable positions, where the
codewords are the different. With forged copy $\mathbf{y}$, the
decoder subtracts $\mathbf{k}$ and performs MD decoding. As we
mentioned earlier, the secret key is unknown to the users and is
only known to the distributor.

Similar to the proof of Theorem \ref{Theorem_Random}, we can
upper-bound the probability of misidentification as
\begin{equation}
P_m(\mathbf{y}|\mathbf{x}_1 \overset{C}{\leftrightarrow}
\mathbf{x}_2)+P(\mathbf{x}_1 \overset{N}{\leftrightarrow}
\mathbf{x}_2).
\end{equation}
In Appendix \ref{App_Theorem_Linear} we have established that over
the ensemble of linear random codes described above,
$P(\mathbf{x}_1 \overset{N}{\leftrightarrow} \mathbf{x}_2)$ also
goes to zero as the code length goes to infinity. Now let us
consider $P_m(\mathbf{y}|\mathbf{x}_1 \overset{C}{\leftrightarrow}
\mathbf{x}_2).$ The codes assigned to the users which are the result of the addition of a secret key to a linear code can be written as:
\begin{equation} \label{coset}
\mathbf{u}G + \mathbf{k}
\end{equation}
where $\mathbf{u}$ is an information message vector. Notice that the ensemble defined by (\ref{coset}) is the same as ensemble of \emph{coset codes} introduced in \cite{Gallager}. In our proof, we need the following lemmas for the coset codes ensemble that are proved in \cite{Gallager}.
\newtheorem{lemma 1}{Lemma}
\begin{lemma} \label{Lm1}
The probability of any binary vector $\mathbf{v}$ being a codeword in the ensemble defined by (\ref{coset}) is equal to $2^{-n}$.
\end {lemma}
\begin{lemma} \label{Lm2}
Let $\mathbf{v_1}$, $\mathbf{v_2}$ be the codewords corresponding to two different information sequences $\mathbf{u_1}$, $\mathbf{u_2}$. Then over the ensemble of codes, $\mathbf{v_1}$, $\mathbf{v_2}$ are statistically independent.
\end {lemma}
Similar to the proof of Theorem \ref{Theorem_Random}, again due to the marking assumption we can
assume $d_H(\mathbf{y,x}_1) < h(n).$ Using MD decoding, misidentification occurs
if there is another binary vector $\mathbf{z}$ of length $n$ in
the codebook such that $d_H(\mathbf{y},\mathbf{z}) \leq
d_H(\mathbf{y},\mathbf{x}_1)$. The total number of binary vectors for which $d_H(\mathbf{y},\mathbf{z}) \leq
d_H(\mathbf{y},\mathbf{x}_1)$ can be upper bounded by: $\sum_{i=1}^{h(n)}{n\choose i} \doteq 2^{nH(0.25)}$. By Lemma \ref{Lm1} and Lemma \ref{Lm2} over the ensemble each of such vectors $\mathbf{z}$ is independent of $\mathbf{x_1}$ with probability $2^{-n}$. Therefore, the total probability of this event in the ensemble is upper-bounded by:
\[
M*2^{-n(1-H(0.25))},
\]
where again the union bound is used. The probability of
misidentification in a random coset code of size $M=2^{nR}$ is at most
\[
2^{-n(1-H(0.25)-R)}.
\]
The above probability goes exponentially to zero as $n\rightarrow
\infty$ for all rates $R<1-H(0.25)$.  $\;$ \\
\end{proof}

When the coalition size, $t$ is larger than two, the minimum
distance decoding will fail due to the following argument. Let
$t=3$ and assume that the forged copy is produced by
\[
\mathbf{y=x}_1 + \mathbf{x}_2 +  \mathbf{x}_{3},
\]
where the additions are modulo-2. It is easy to check that this
attack satisfies the marking assumption. For $t > 3$ the coalition
can consider only three of the pirates, ignore the rest and apply
this attack. Following the footsteps in the proof of Theorem
\ref{Theorem_3tJT}, it is easy to see that the MD-achievable rate
is zero. Indeed, it can also be shown that the resulting ``BSC
channel" has crossover probability $1/2$, and this negative result
is obtained \cite{Book_Cover}.

\section{Belief Propagation for Fingerprinting} \label{Sec_BP} Implementing the exact minimum distance decoder may require
prohibitive complexity (especially for large codeword lengths).
This motivates our approach of using the BP framework to
approximate the MD decoder. More specifically, in this section, we
present explicit constructing of graph-based codes, along with the
corresponding BP decoders, which are tailored for the
fingerprinting application.
\subsection{Averaging attack}
\label{Sec_Avg_Simu} As remarked earlier, the two-pirate averaging
attack will produce a ``channel'' {\em almost equivalent} to the
classical BEC. This inspires the use of graphical codes based on
the Repeat Accumulate (RA) framework \cite{divsalar1998ctt}, such
as the nonsystematic irregular RA code of \cite{pfister2005cae}
and the irregular ARA code of \cite{Pfister2007ARA}, which were
shown to be capacity achieving for the BEC. In our simulations, we
use the original regular RA codes of \cite{divsalar1998ctt} due to
their simplicity and good performance for low rate scenarios. It
is worth noting that all the techniques discussed in the sequel
can be applied directly to the irregular codes presented in
\cite{pfister2005cae,Pfister2007ARA}. For the sake of completeness
we review briefly the encoding procedure for regular RA codes:
first, the information bits are repeated a constant number of
times (by a regular repetition code) and interleaved. The
interleaved bits are then accumulated to generate the code
symbols. Similarly, one can employ the standard BP iterative
decoding approach \cite{luby2001eec} to identify the pirates.
However, we argue next that significant performance improvement
can be obtained via a key modification to the iterative
decoder\footnote{in the following, the fingerprinting codeword
alphabets are $\{0,1\}$ after decoder transformation and the
addition is module-2.}.

It is well known that the standard iterative algorithm will fail if
a stopping set exists in the erased positions \cite{di2002finite}.
Unfortunately, a stopping set always will exist in the erased
positions produced by averaging attack. To see this, it is more
convenient to represent the RA code using the appropriate bipartite
Tanner graph containing a set of variables
$\mathpzc{V}=\{v_1,v_2,\ldots\}$ and a set of check nodes. The
reader are referred to
\cite{divsalar1998ctt,pfister2005cae,Pfister2007ARA} for more
details on the graphical representation of RA codes. A stopping set
$\mathpzc{S}$ is, therefore, a subset of $\mathpzc{V}$, such that
all neighbors of $\mathpzc{S}$ are connected to $\mathpzc{S}$ at
least twice. The standard BP algorithm
can now be stated as the follows. \\

\noindent \textit{[Standard BP]}:
\begin{enumerate}
  \item Find a check node that satisfies the following
        \begin{itemize}
          \item This check node is not labelled as ``finished''.
          \item The values of all but one of the variable nodes connected to the check node are known.
        \end{itemize}
        Set the value of the
  unknown erased one to be the module-2 addition of the other variable nodes. And label
  that check node as ``finished''.
  \item \textit{Repeat} step 1 until all check nodes are labeled as ``finished''
  or the decoding cannot continue further. If the latter happens,
  declare the decoding fail.
  \\
\end{enumerate}
It is now easy to see that, in the stopping set, every check node
is connected to at least two erased variable nodes and the decoder
will halt at this point. The following result establishes the
limitation of the standard BP decoder in our fingerprinting
scenario \\

\begin{proposition} 
Let $\mathpzc{V}_{B1}$ and $\mathpzc{V}_{B2}$ be the set of values
of the variable node set $\mathpzc{V}$ corresponding to pirate
fingerprints $\mathbf{x}_1$ and $\mathbf{x}_2$, respectively. And
let $\mathpzc{V}_d$ be the set of variable nodes where the
corresponding values in $\mathpzc{V}_{B1}$ and $\mathpzc{V}_{B2}$
are different. Then $\mathpzc{V}_d$ is a stopping set.
\\
\end{proposition}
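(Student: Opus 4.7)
The plan is to exploit the linearity of the RA code together with the structural property that every check node in the Tanner graph encodes a modulo-$2$ parity constraint which must be satisfied by every codeword. The stopping set property then falls out of a simple parity argument on the codeword $\mathbf{x}_1 \oplus \mathbf{x}_2$.

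First, I would observe that since the RA code is a binary linear code, $\mathbf{c} := \mathbf{x}_1 \oplus \mathbf{x}_2$ is itself a codeword. By definition, the coordinates on which $\mathbf{x}_1$ and $\mathbf{x}_2$ disagree are precisely the coordinates where $\mathbf{c}$ equals $1$, so the support of $\mathbf{c}$ on the variable-node set $\mathpzc{V}$ is exactly $\mathpzc{V}_d$. (I would take a moment to remind the reader that, as noted in the surrounding text, after the decoder's alphabet transformation all arithmetic is mod-$2$, so this XOR interpretation is the right one.)

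Next I would invoke the defining property of the Tanner graph: every check node $c$ represents a parity-check equation which is satisfied by every codeword, meaning the mod-$2$ sum of the values of the variable nodes adjacent to $c$ is zero. Applying this to the codeword $\mathbf{c}$, the number of neighbors of $c$ that lie in the support $\mathpzc{V}_d$ must be even. Hence that number is either $0$ or at least $2$; in particular, it cannot equal $1$.

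Finally, to conclude that $\mathpzc{V}_d$ is a stopping set, I would pick any check node $c$ that is a neighbor of $\mathpzc{V}_d$, i.e., has at least one neighbor in $\mathpzc{V}_d$. By the parity argument just given, $c$ must in fact have at least two neighbors in $\mathpzc{V}_d$, which is exactly the definition of a stopping set recalled just before the proposition. There is no real obstacle here: the only subtle point is making explicit that the notion of ``check node'' in the RA Tanner graph indeed corresponds to a linear parity-check constraint over $\mathbb{F}_2$, so that linearity of the code translates into the desired mod-$2$ identity; once that is in place the argument is a one-line parity observation.
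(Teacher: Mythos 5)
Your proof is correct and rests on exactly the same parity observation as the paper's: every check node must see an even number of positions where $\mathbf{x}_1$ and $\mathbf{x}_2$ disagree, so it cannot have exactly one neighbor in $\mathpzc{V}_d$. The paper phrases this as a proof by contradiction working with the two codewords separately, while you argue directly via the difference codeword $\mathbf{x}_1\oplus\mathbf{x}_2$ (whose support is $\mathpzc{V}_d$); this is a slightly cleaner packaging of the identical argument.
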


\begin{proof}
This statement is proved by contradiction. First we assume that
$\mathpzc{V}_d$ is not a stopping set. It means that $\exists j \in
\bigcup_{i \in \mathpzc{V}_d}N(i)$ where the check node $j$ has only
one neighbor $i'$ in $\mathpzc{V}_d$. Here we denote the neighbor of
node $i$ in the graph as $N(i)$. For the neighboring variable nodes
of this check node, we have
\begin{equation} \label{Eq_Stop_dev1}
\left\{
\begin{array}{ll}
\mathpzc{V}_{B1}(i)=\mathpzc{V}_{B2}(i) & \forall i \neq i', i \in N(j) \\
\mathpzc{V}_{B1}(i')=\mathpzc{V}_{B2}(i')+1 & i' \in N(j).  \\
\end{array} \right.
\end{equation}
However, from the check equation of this check node
\begin{equation} \label{Eq_Stop_dev2}
\sum_{i \in  N(j)} \mathpzc{V}_{B1}(i)=\sum_{i \in N(j)}
\mathpzc{V}_{B2}(i)=0,
\end{equation}
where the addition is module-2. It is obvious that
(\ref{Eq_Stop_dev1}) contradicts with (\ref{Eq_Stop_dev2}) since
the total number of variable nodes such that $\mathpzc{V}_{B1}(i)
\neq \mathpzc{V}_{B2}(i), i \in N(j)$ should be even. Thus
$\mathpzc{V}_d$ is a stopping set. \\
\end{proof}
Since, under averaging attack, the bits of the forged fingerprint
will be erased whenever the pirate fingerprints are different in
the Tanner graph, then $\mathpzc{V}_d$ will be always contained in
the erased positions and the iterative decoder will fail. The
modification, presented next, will break the stopping set
$\mathpzc{V}_d$, and hence, allow the iterative decoder to proceed
forward. The key observation is that for every erased position in
$\mathpzc{V}_d$, the pirate fingerprints can only be represented
by only two combinations $\{0,1\}$ or $\{1,0\}$. It allows us to
choose one variable node in this stopping set, and set its value
to $1$. The modified forged
fingerprint will then be ``closer'' to one of the pirate fingerprints. In summary, the decoder becomes\\

\noindent \textit{[Modified BP for fingerprinting]}:

\begin{enumerate}
\item Perform the standard BP algorithm, remove all the ``finished'' labels and \textit{Go to} step $2$
\item Choose a proper variable node in $\mathpzc{V}_d$ (different from previous choices), and set
  its value to $1$. If the decoder has executed this step more
  than $N_{max}$ times, declare a decoding failure and exit.
  \item Run the standard BP on the new graph. If the decoder fails,
  reset the variable nodes to their original values and \textit{Go
  to} step $2$.
  \\
\end{enumerate}

\begin{figure}
\centering \epsfig{file=./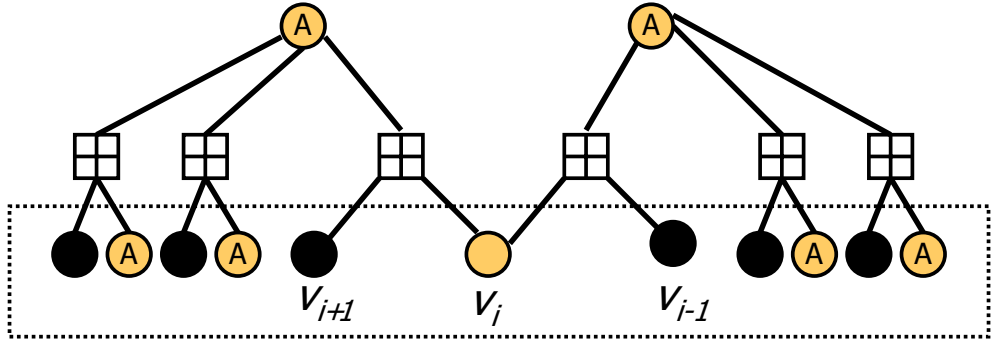 ,
width=0.8\textwidth} \caption{Proper variable node to be chosen in
step 2 of proposed modified BP algorithm for two-pirate averaging
attack.} \label{Fig_var_choose}
\end{figure}

In step $2$, we must make sure that the chosen variable node
breaks the stopping set $\mathpzc{V}_d$. The neighboring variables
nodes of a degree-3 check node in RA code are good choices. From
the check equations in (\ref{Eq_Stop_dev2}), the erased variables
nodes will appear in pair. If we set the value of one of the two
erased neighbor variable node $v_i$ as 1, this degree-3 check node
is connected to $\mathpzc{V}_d \setminus v_i$ with only one edge.
Then $\mathpzc{V}_d \setminus v_i$ is not a stopping set. We also
need to choose the variable node which will affect as much other
variable nodes in $\mathpzc{V}_d \setminus v_i$ as possible by
setting its value. Since all check nodes of RA code are degree-3,
we choose such variable node $v_i$ in the degree-2
variable-node-chain of RA code, as shown in Fig.
\ref{Fig_var_choose}. The check node is depicted as $\boxplus$,
the unerased variable nodes as black circle and the erased ones as
hollow circle. Furthermore, each variable node which will benefit
from guessing $v_{i}$ is shown as hollow circle with the letter
``A'' in the figure. The key observation is that, for node $v_i$,
the two neighboring accumulator output nodes, i.e., $v_{i-1}$ and
$v_{i+1}$, correspond to non-erased bits. This implies that that
setting the value of $v_i$ will at least affect $6$ other variable
nodes of rate $1/3$ RA code.

\begin{figure}
\centering \epsfig{file=./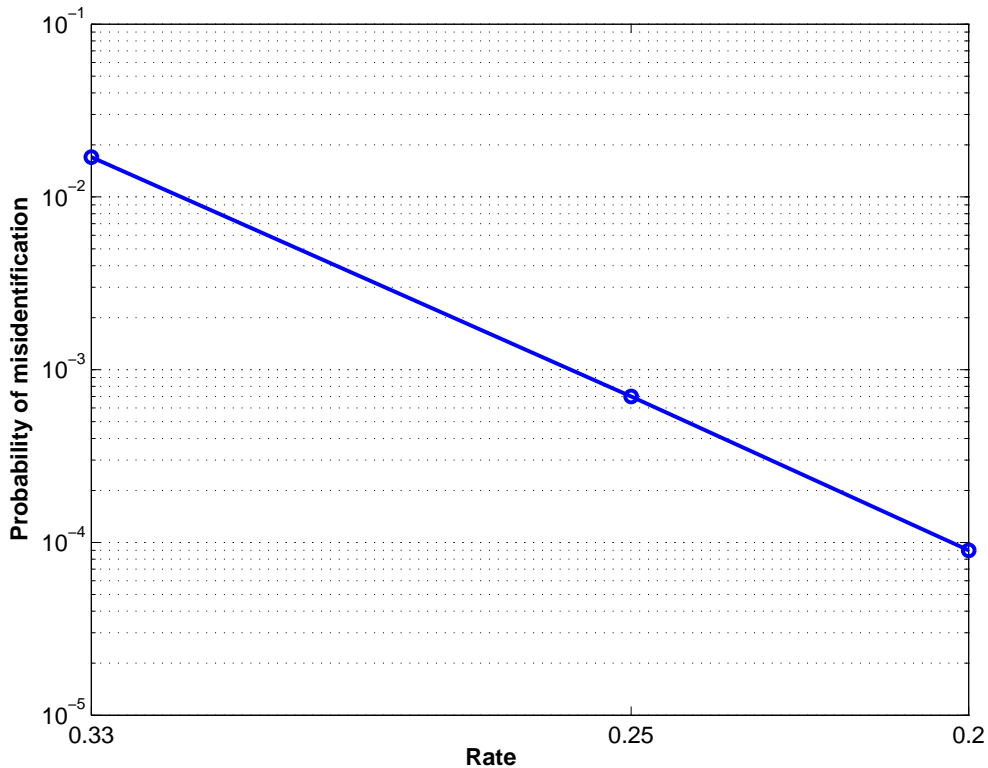 ,
width=0.8\textwidth} \caption{Probability of misidentification
under two-pirate averaging attack using RA codes with different
rates and modified BP algorithm without variable node selection.}
\label{Fig_FP_RA_BEC_Rate}
\end{figure}

\begin{figure}
\centering \epsfig{file=./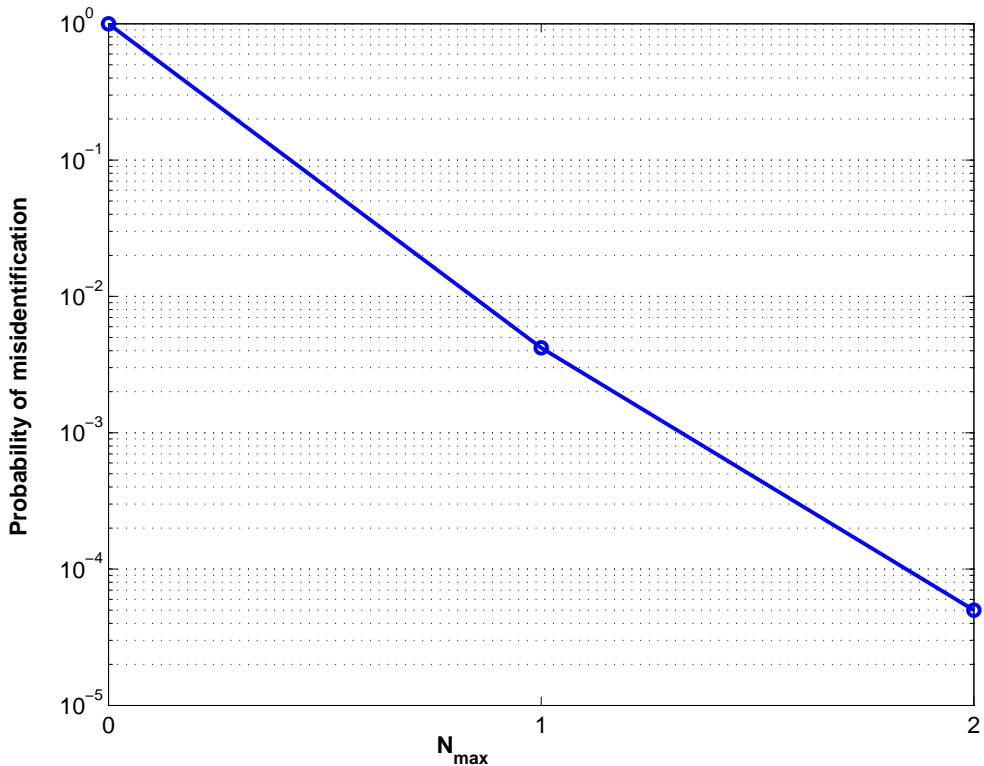 ,
width=0.8\textwidth} \caption{Probability of misidentification
under two-pirate averaging attack using rate 1/3 RA code and
modified BP algorithm with different $N_{max}$.}
\label{Fig_FP_RA_BEC}
\end{figure}

\begin{figure}
\centering \epsfig{file=./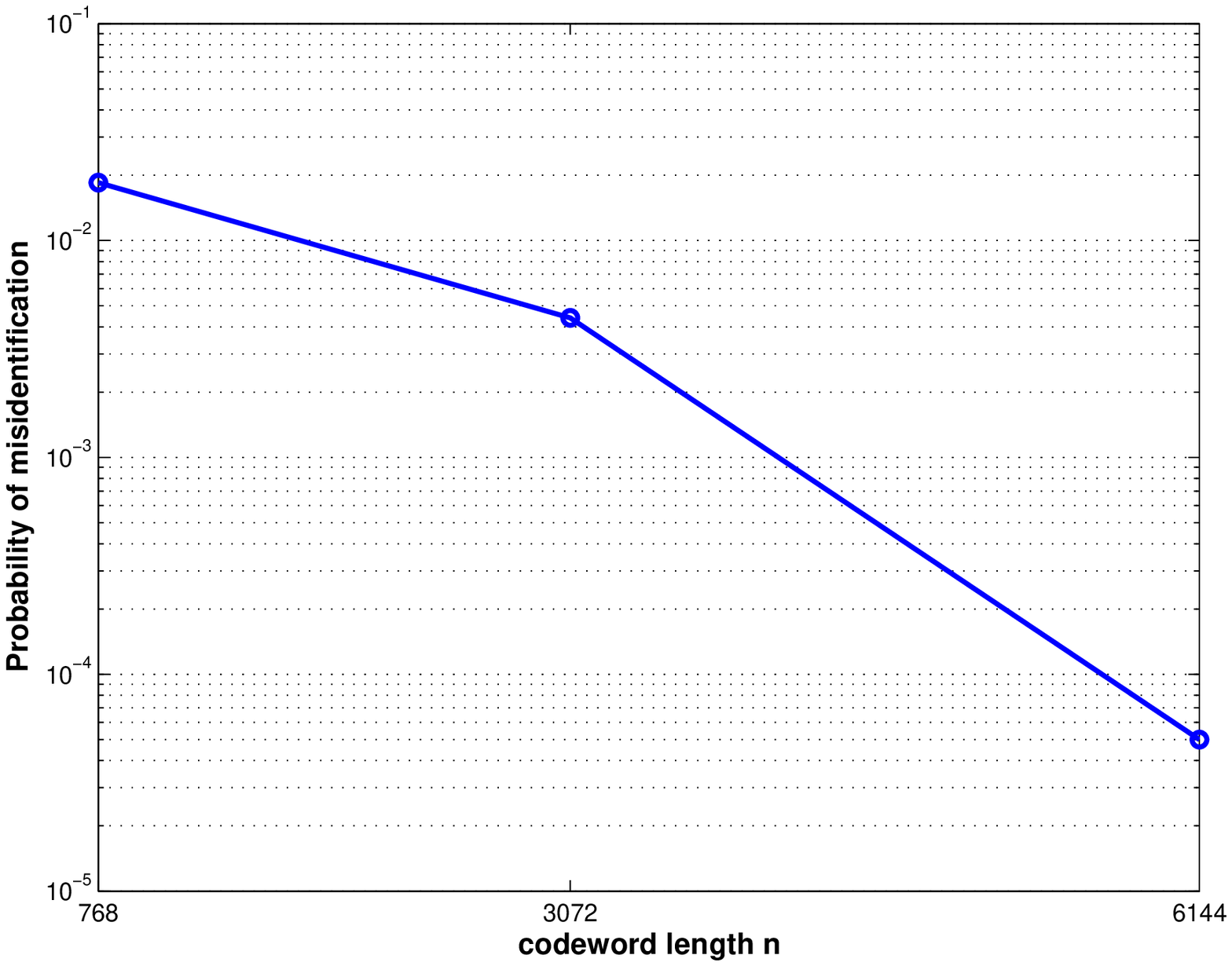,
width=0.8\textwidth} \caption{Probability of misidentification
under two-pirate averaging attack using rate 1/3 RA code and
modified BP algorithm with different code lengths $n$.}
\label{Fig_FP_RA_BEC_Len}
\end{figure}

Now, we are ready to report our simulation results. First, we show
the performance of proposed algorithm with different rate RA codes
without variable node selection in Fig~\ref{Fig_FP_RA_BEC_Rate}
(i.e., we select the first unerased variable node in the RA
degree-2 variable-node-chain and set $N_{max}=1$). Here, the
number of information bits $n/R=16384$ is fixed for all rates, to
make the number of users $M$ the same. We observe that, without
selecting the variable node as shown in Fig~\ref{Fig_var_choose},
the probability of misidentification $\bar{P}^a_m$ is high for
rate $1/3$. This performance can be improved by the proposed
algorithm for variable node selection and increasing $N_{max}$ as
depicted in Fig.~\ref{Fig_FP_RA_BEC}. Finally, in
Fig.~\ref{Fig_FP_RA_BEC_Len} we report $\bar{P}^a_m$ with
different code length $n$ and $N_{max}=2$.

Finally, we note that our algorithm is similar, in spirit, to the
proposed guessing algorithm in \cite{pishronik2004dld}. The critical
difference is that the structure of our problem ensures that the
guessed bit always corresponds to one of the pirates, and hence, we
do not need to worry about the possibility of contradictions as the
iteration proceeds.
\subsection{The Marking Assumption: The Memoryless Attack}
In this subsection, we report our simulation results for the
two-pirate memoryless attack. In this attack, when the pirates
encounter a detectable position, they choose $0,1$ independently
and with equal probability to form the forged copy. We use rates
$1/8$, $1/9$ and $1/10$ ARA codes based on the low rate
protographs presented in \cite{divsalar}. The protographs of the
codes are depicted in Fig~\ref{Fig_Prot}. For a formal description
of the ARA codes, we refer the interested readers to
\cite{divsalar}, \cite{Pfister2007ARA} and references therein.
Decoding is done iteratively using the BP framework with a maximum
number of iterations equal to $60$. Here, the decoder treats the
forged fingerprint as the output of a BSC with crossover
probability equal to $0.25$. In Fig~\ref{Pm_Fig}, the probability
of misidentification $\bar{P}_m$ is depicted versus different code
lengths for different rates. As shown in the figure, it is clear a
vanishing small misidentification probability is achievable for
rate $1/9$ which is about an order of magnitude higher than the
best result available in the literature for explicit
fingerprinting codes.
\begin{figure}
\centering \epsfig{file=./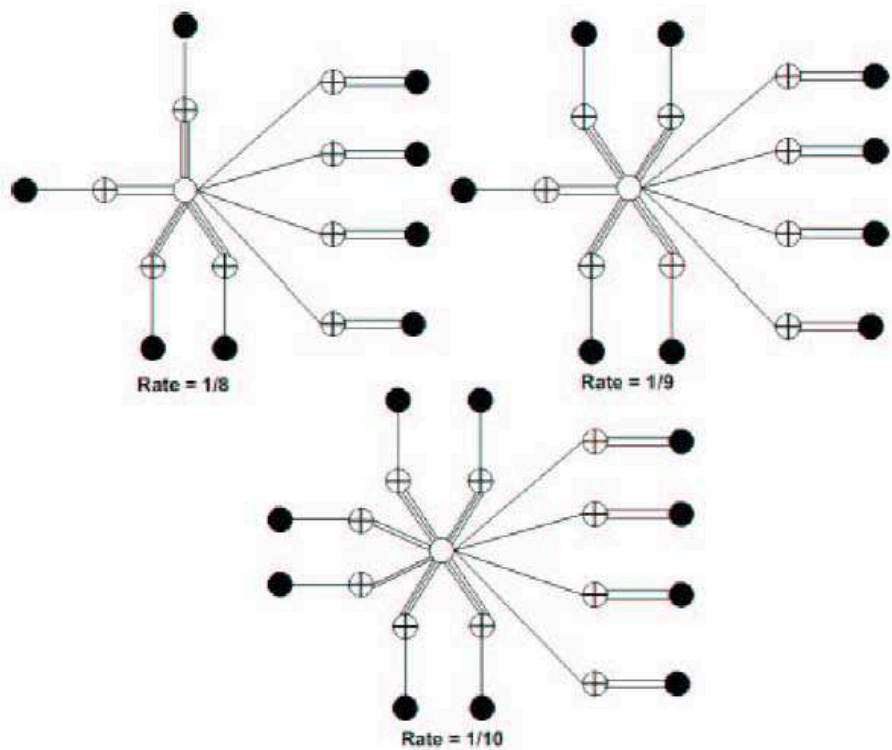 , width=0.8\textwidth}
\caption{Protographs of rate 1/8, 1/9, 1/10 ARA codes.}
\label{Fig_Prot}
\end{figure}

\begin{figure}
\centering \epsfig{file=./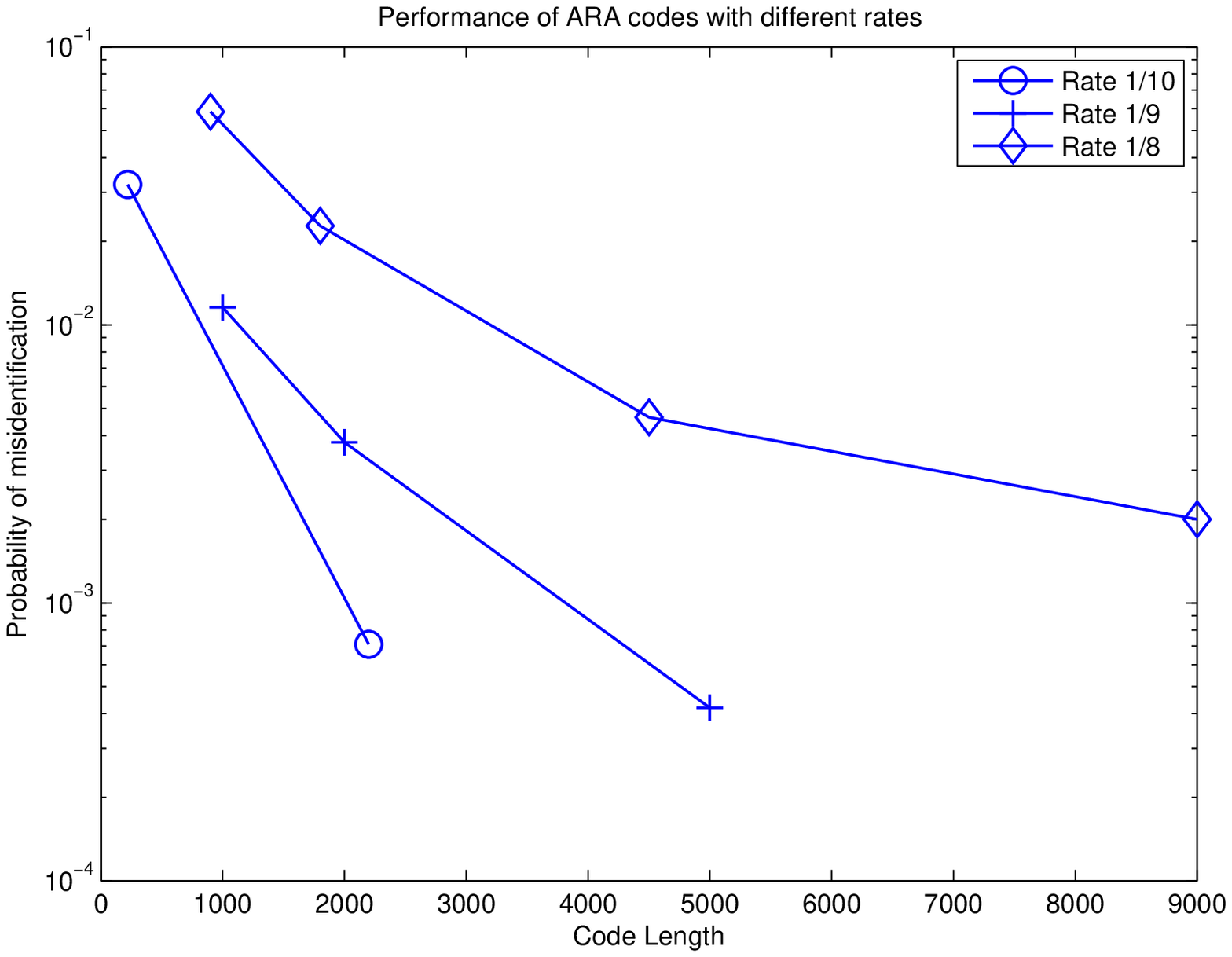 ,
width=0.8\textwidth} \caption{Probability of misidentification for
ARA codes with different rates and code lengths, under two-pirate
memoryless attack.} \label{Pm_Fig}
\end{figure}

\section{Conclusion} \label{Sec_conclu}
This paper developed an information theoretic framework for the
design of low complexity coding/decoding techniques for
fingerprinting. More specifically, we established the superior
performance of the minimum distance decoder and validated our
theoretical claims via explicit construction of BP
encoding/decoding schemes. In the averaging attack scenario, our
framework was inspired by the equivalence between our problem and
the BEC. We also showed that the worst case attack, under the
marking assumption, is equivalent to a BSC with a cross-over
probability equal to $1/4$. Our approach for the averaging attack
can handle arbitrary coalition sizes, whereas it was shown that
the MD decoder recover from marking assumption attacks only with
coalitions composed of two pirates. This negative result motives
our current investigations on more sophisticated approaches for
pirate tracing using the intimate connection between collusion in
digital fingerprinting and multiple access channels.

\appendices
\section{On non-close pairs in random ensemble}
We will examine the probability of non-close pairs for random
i.i.d and linear codebook ensembles, and show that these events
will not happen with high probability.
\subsection{i.i.d codebook ensemble} \label{App_close_pair}
For a codebook $C$ in the i.i.d ensemble and $1\leq d\leq n$,
define the number of unordered pairs of codewords
$(\mathbf{x}_i,\mathbf{x}_j)$ with $i \neq j$ in $C$ at distance
$d$ apart as
\begin{equation}
S_c(d):=\sum_{i=1}^{M} \sum_{j=1}^{i-1} \Phi \{
d_{H}(\mathbf{x}_i,\mathbf{x}_j)=d\},
\end{equation}
where $\Phi (\cdot )$ is the indicator function. In \cite{Forney},
it is established that with probability going to one as
$n\rightarrow \infty $
\begin{equation} \label{Eq_random_dis}
S_c(d) \doteq  \left\{ \begin{array}{ll}
2^{n(2R+H(\frac{d}{n})-1)} & \textrm{ $n\delta_{GV}(2R)<d<n(1-\delta_{GV}(2R)) $}\\
0 & \textrm{otherwise,}\\
\end{array} \right.
\end{equation}
where $\delta_{GV}(\cdot) $ is the Gilbert-Varshamov distance
which for $0<R<1$, $\delta_{GV}(R) $ is defined as the root
$\delta<0.5$ of the equation $H(\delta)=1-R$. And $\delta_{GV}(R)
$ is zero for $R \geq 1$. Using (\ref{Eq_random_dis}), we can
write the probability of non-close pairs in the codes of the
random ensemble as
\begin{equation}
\frac{\sum_{d>n(1/2+\epsilon)}^{n(1-\delta_{GV}(2R))}2^{n(2R+H(d/n)-1)}}{2^{2nR}}<\frac{n2^{n(2R-1+H(\frac{1}{2}+\epsilon))}}{2^{2nR}}.
\end{equation}
which goes exponentially to zero as $n \rightarrow \infty.$
\subsection{Random binary linear codebook ensemble} \label{App_Theorem_Linear}
For a code $C$ in the linear ensemble and $1 \leq  d \leq n$ by
the symmetry of linear codes we can write
\begin{equation}
S_c(d) = \sum_{i=1}^{M} \sum_{j=1}^{i-1} \Phi \{
d_{H}(\mathbf{x}_i,\mathbf{x}_j)=d\}=\frac{1}{2}\sum_{i=1}^{M}
\sum_{j \neq i} \Phi \{
d_{H}(\mathbf{x}_i,\mathbf{x}_j)=d\}=\frac{M}{2}N_c(d)\doteq
2^{nR} N_c(d),
\end{equation}
where $N_c(d) := \sum_{j \neq i} \Phi \{
d_{H}(\mathbf{x}_i,\mathbf{x}_j)=d\}$. In \cite{Forney},  it is
shown that with probability going to one as $n \rightarrow \infty$
\begin{equation} \label{Eq_linear_dis}
N_c(d) \doteq \{
\begin{array}{cc} 2^{n(R+H(d/n)-1)} , &n
\delta_{GV}(R) <d<n(1-\delta_{GV}(R)) \\ 0 , & \; otherwise.
\end{array}
\end{equation}
Therefore, the average probability of a pair being non-close can
be written as
\begin{equation}
\frac{\sum_{d>n(1/2+\epsilon)}^{n(1-\delta_{GV}(R))}2^{n(2R+H(d/n)-1)}}{2^{2nR}}<\frac{n2^{n(2R-1+H(\frac{1}{2}+\epsilon))}}{2^{2nR}},
\end{equation}
which again goes exponentially to zero as $n \rightarrow \infty.$

\section{Computation of $M_b(l,|\mathpzc{E}|,|\mathpzc{E}|-1)$} \label{App_Matrix}
We will show that for $l \geq |\mathpzc{E}|$
\begin{equation} \label{Eq_NumEminus1}
M_b(l,|\mathpzc{E}|,|\mathpzc{E}|-1)=M_b(|\mathpzc{E}|-1,l,|\mathpzc{E}|-1)(2^{|\mathpzc{E}|}-1).
\end{equation}

To this end, by symmetry,
\[
M_b(l,|\mathpzc{E}|,|\mathpzc{E}|-1)=M_b(|\mathpzc{E}|,l,|\mathpzc{E}|-1).
\]
And from Appendix A of \cite{di2002finite} and $|\mathpzc{E}| \leq
l$, the RHS equals to
\begin{align} \notag
&M_b(|\mathpzc{E}|,l,|\mathpzc{E}|-1)=M_b(|\mathpzc{E}|-1,l,|\mathpzc{E}|-1)2^{|\mathpzc{E}|-1}\\
\notag&+M_b(|\mathpzc{E}|-1,l,|\mathpzc{E}|-2)(2^{l}-2^{|\mathpzc{E}|-2}).
\notag
\end{align}
From Appendix A of \cite{di2002finite}, we also have the following
recursive formula for $j=1 \ldots |\mathpzc{E}|-2 $
\begin{align} \notag
&M_b(|\mathpzc{E}|-j,l,|\mathpzc{E}|-1-j)=M_b(|\mathpzc{E}|-1-j,l,|\mathpzc{E}|-1-j)2^{|\mathpzc{E}|-1-j}\\
\notag&+M_b(|\mathpzc{E}|-1-j,l,|\mathpzc{E}|-2-j)(2^{l}-2^{|\mathpzc{E}|-2-j}).
\end{align}
And $M_b(|\mathpzc{E}|,l,|\mathpzc{E}|-1)$ equals to
\begin{align} \label{Eq_NumEminus1_dev_1}
\sum_{j=1}^{|\mathpzc{E}|-1} & \left \{
M_b(|\mathpzc{E}|-j,l,|\mathpzc{E}|-j)2^{|\mathpzc{E}|-j}
\prod_{p=1}^{j-1}(2^l-2^{|\mathpzc{E}|-1-p}) \right \} \\ \notag +
& M_b(1,l,0)*(2^l-1)
\prod_{p=1}^{|\mathpzc{E}|-2}(2^l-2^{|\mathpzc{E}|-1-p}),
\end{align}
where $M_b(1,l,0)=1$.

Finally, using (\ref{Eq_NumFullRank}) in
(\ref{Eq_NumEminus1_dev_1}),
\[
M_b(|\mathpzc{E}|,l,|\mathpzc{E}|-1)=\sum_{j=1}^{|\mathpzc{E}|}M_b(|\mathpzc{E}|-1,l,|\mathpzc{E}|-1)2^{|\mathpzc{E}|-j},
\]
And it is easy to check that the above formula equals to
(\ref{Eq_NumEminus1}).

\bibliographystyle{IEEEtran}
\bibliography{FP}

\end{document}